\newcommand{\Real}{\mathbb{R}}
\newcommand{\Comp}{\mathbb{C}}
\newcommand{\seq}{\subseteq}
\newcommand{\pspace}{\Real^3 \setminus \{0\}}
\newcommand{\Lightcone}{\mathcal{L}_+}
\newcommand{\poincare}{\mathrm{ISO}^+(3,1)}
\newcommand{\trace}{\mathrm{Tr}}
\newcommand{\SO}{\mathrm{SO}}
\newcommand{\SU}{\mathrm{SU}}
\newcommand{\so}{\mathfrak{so}}
\newcommand{\ISO}{\mathrm{ISO}}
\newcommand{\bfk}{\boldsymbol{k}}
\newcommand{\xhat}{\boldsymbol{e}_x}
\newcommand{\zhat}{\boldsymbol{e}_z}
\newcommand{\rp}{\text{Re}}
\newtheorem{theorem}{Theorem}
\newtheorem{lemma}[theorem]{Lemma}
\newtheorem{definition}[theorem]{Definition}
\begin{document}

\author{Eric Palmerduca}
\email{ep11@princeton.edu}
\affiliation{Department of Astrophysical Sciences, Princeton University, Princeton, New Jersey 08544}
\affiliation{Plasma Physics Laboratory, Princeton University, Princeton, NJ 08543,
U.S.A}

\author{Hong Qin}
\email{hongqin@princeton.edu}
\affiliation{Department of Astrophysical Sciences, Princeton University, Princeton, New Jersey 08544}
\affiliation{Plasma Physics Laboratory, Princeton University, Princeton, NJ 08543,
U.S.A}

\title{Graviton topology}
\date{\today}

\begin{abstract}
Over the past three decades, it has been shown that discrete and continuous media can support topologically nontrivial waves. Recently, it was shown that the same is true of the vacuum, in particular, right (R) and left (L) circularly polarized photons are topologically nontrivial. Here, we study the topology of another class of massless particles, namely gravitons. We show that the collection of all gravitons forms a topologically trivial vector bundle over the lightcone, allowing us to construct a globally smooth basis for gravitons. The graviton bundle also has a natural geometric splitting into two topologically nontrivial subbundles, consisting of the R and L gravitons. The R and L gravitons are unitary irreducible bundle representations of the Poincar\'{e} group, and are thus elementary particles; their topology is characterized by the Chern numbers $\mp 4$. This nontrivial topology obstructs the splitting of graviton angular momentum into spin and orbital angular momentum.\end{abstract}

\maketitle

\section{Introduction}
It has been established that topologically nontrivial modes can exist in various media such as solids ~\cite{Qi2011}, fluids \cite{Delplace2017,Tauber2019,Souslov2019,Perrot2019,Faure2022}, and plasmas \cite{Yang2016,Gao2016,Parker2020a,Fu2021,Fu2022,Qin2023, Qin2024plasma}. The study of such modes has lead to important physical discoveries such as the quantum Hall effects and the bulk-edge correspondence. More recently it has been shown that the vacuum also supports topologically nontrivial modes, in particular, left (L) and right (R) circularly polarized photons form topologically nontrivial vector bundles with Chern numbers $+2$ and $-2$, respectively \cite{PalmerducaQin_PT}. More generally, it is possible for elementary massless particles to possess nontrivial topology due to a hole in their momentum space at $k=0$, although only the photon case has been studied in depth. Here we present a thorough and rigorous examination of the topology of another massless particle, the (linearized) graviton. While gravitational waves, and their quanta, gravitons, were predicted early in the development of general relativity and quantum field theory, they are of particular interest now due to two recent experimental breakthroughs. Within the past decade, LIGO has detected gravitational waves from black hole and neutron star merger events, experimentally confirming the existence of gravitational waves \cite{GravWave2016Blackhole,GravWave2017NeutronStar}. Even more recently, graviton-like quasiparticles have been observed in fractional quantum Hall liquids \cite{Liang2024}. As topology, particularly Chern numbers, plays a fundamental role in the quantum Hall effects \cite{Thouless1982,Avron1983,Kohomoto1985,Hatsugai1993,Avron2003}, an understanding of the topology of gravitons is needed.

Working in the transverse-traceless (TT) gauge for weak gravity \cite{Schutz2009}, we show that the collection of all gravitons forms a rank-2 vector bundle $\Xi$ over the the forward lightcone $\Lightcone$. We prove that the graviton bundle $\Xi$ is topologically trivial, that is, 
\begin{equation}\label{eq:trivial_splitting}
    \Xi \cong \tau_1 \oplus \tau_2
\end{equation}
where $\tau_1$ and $\tau_2$ are topologically trivial line bundles over $\Lightcone$. As such, the Chern number of $\Xi$ is zero. This implies that it is possible to write down a smooth global basis for gravitational waves in momentum space, which we do using a modification of the clutching construction from algebraic topology \cite{HatcherVBKT,PalmerducaQin_PT}.  While such global bases are frequently invoked in the literature \cite{Yu1999, Maggiore2005, Hu2021, Carney2024}, until now there was no proof of their existence. In fact, the stronger assumption is typically made that there exists a smooth global basis which is linearly polarized \cite{Yu1999, Maggiore2005, Hu2021, Carney2024}. This stronger assumption is, however, false; we prove that $\Xi$ has no linearly polarized subbundles, even topologically nontrivial ones. Thus, any global basis necessarily involves elliptical polarizations.

While the total graviton bundle $\Xi$ is topologically trivial and can be broken down as in (\ref{eq:trivial_splitting}), its Poincar\'{e} symmetry gives a more natural geometric splitting into the $R$ and $L$ graviton bundles $\Xi_+$ and $\Xi_-$. These consist of gravitons with helicities $+2$ and $-2$, respectively, describing their behavior under rotations about their momentum axis. $\tau_1$ and $\tau_2$ are topologically trivial while linear polarizations do not even form well-defined vector bundles. In contrast, $\Xi_+$ and $\Xi_-$ are well-defined vector bundles with nontrivial topology completely characterized by the Chern numbers $\mp 4$.  This topological nontriviality reflects the fact that the internal (base manifold) and external (fiber) degrees of freedom (DOFs) are twisted together. This twisting acts as a topological obstruction to the splitting of graviton angular momentum into spin angular momentum (SAM) and orbital angular momentum (OAM). This is important since there have been recent efforts to study the OAM of gravitational waves  predicated on the validity of such a splitting \cite{Bialynicki-Birula_2016, Baral2020, Wu2023}.

This article is organized as follows. In Section \ref{sec:graviton_bundle} we show that in the limit of weak gravity, the collection of all gravitons form a smooth vector bundle $\Xi$ over the forward lightcone. In Section \ref{sec:Poincare_symmetry}, we show that the Poincar\'{e} symmetry of $\Xi$ naturally decomposes the bundle into the R and L gravitons $\Xi_\pm$, which correspond to elementary particles. In Section \ref{sec:topology_of_gravitons}, we completely classify these vector bundles via their Chern numbers, showing that $\Xi$ is trivial while $\Xi_{\pm}$ are nontrivial. We proceed in Section \ref{sec:global_basis} to use a modified clutching construction to explicitly construct a smooth global basis of $\Xi$. In Section \ref{sec:linear_polarizations}, we show that unlike the trivial and circular polarizations, linear polarizations do not form well-defined vector bundles and thus have no associated topology. In Section \ref{sec:OAM_SAM} it is shown that the topological nontriviality of $\Xi_\pm$ obstructs the splitting of graviton angular momentum into OAM and SAM. Lastly, in Section \ref{sec:curved_spacetime} we discuss extensions of these results to gravitational waves in curved background spacetimes.

\section{The graviton bundle $\Xi$}\label{sec:graviton_bundle}

We work in the limit of weak gravity, in which the spacetime metric $g_{\mu \nu}$ can be expanded around the flat Minkowski metric $\eta_{\mu \nu}$ with signature $(-,+,+,+)$ as
\begin{equation} \label{eq:flat_assumption}
    g_{\mu \nu} = \eta_{\mu \nu} + h_{\mu \nu},
\end{equation}
where $h_{\mu \nu}$ is symmetric and $|h_{\mu \nu}| \ll 1$ \cite{Schutz2009}. $h_{\mu \nu}$ is the graviton field \cite{Zee2010}. Following Schutz \cite{Schutz2009}, in vacuum, far from the sources of the field, the Einstein equations in the TT gauge take the form
\begin{equation}
    \eta^{\mu \nu}\tensor{h}{^{\alpha\beta}_{,\mu\nu}} = 0.
\end{equation}
The solutions can be written in Fourier space as
\begin{equation}
    h^{\alpha \beta} = A^{\alpha \beta}\exp(ik^{\mu} x_{\mu})
\end{equation}
where $k^\mu = (\omega, \bfk)$ is the four-momentum and $A^{\alpha \beta}$ is a complex symmetric tensor. We refer to these solutions as gravitons, as they are the modes of the graviton field.  $k$ and $A$ are both subject to constraints. The Einstein equation requires that
\begin{equation}
    k^\alpha k_\alpha = 0,
\end{equation}
so $\omega = \pm|\bfk|$, i.e., $k$ resides on the lightcone. It is actually sufficient to consider only the $\omega = |\bfk|$ solutions in which $k$ is restricted to the forward lightcone $\Lightcone$, as the $\omega = -|\bfk|$ solutions are then determined by the condition that $h^{\mu \nu}$ is real. Notably, the forward $\Lightcone$ has a hole at the origin $k^\mu = 0$, as can be seen by a number of arguments. From a physical perspective, modes with $k^\mu = 0$ are not propagating and thus do not represent gravitons. Such modes are not accessible by Lorentz transforming propagating modes since massless particles have no rest frame. Mathematically, the origin must be removed in order for $\Lightcone$ to be a regular submanifold of Minkowski space. Furthermore, as pointed out by Staruszkiewicz \cite{Staruszkiewicz1973b}, the forward lightcone is only geodesically complete if $k\neq 0$ is excluded. By parametrizing $\Lightcone$ by its spatial part $\bfk$, the homeomorphism $\Lightcone \cong \pspace$ is obtained.

Beyond the constraint that $A^{\alpha \beta}$ is symmetric, it must additionally satisfy the TT gauge conditions \cite{Schutz2009}:
\begin{subequations}
\begin{gather}
    A^{\alpha \beta}k_{\beta} = 0 \label{eq:transverse_gauge_full}\\
    \tensor{A}{^\alpha_\alpha} = 0 \label{eq:traceless_gauge_full}\\
    A^{\alpha \beta}U_\beta = 0, \label{eq:temporal_gauge}
\end{gather}
\end{subequations}
where $U_\beta$ is any fixed timelike unit vector. The first two conditions reflect the transverse and traceless gauge choices, respectively. The third condition fixes a residual temporal gauge freedom. The canonical choice, which we adopt here, is to take $U_\beta = \delta_{\beta}^0$ since this is the distinguished timelike unit vector. The temporal gauge condition thus imposes that
\begin{equation}\label{eq:complete_gauge}
    A^{\alpha 0} = A^{0 \alpha} = 0,
\end{equation}
allowing us to consider just the tensors $A^{ij}$, where $i,j \in \{1,2,3\}$ span the spatial DOFs. The gauge conditions (\ref{eq:transverse_gauge_full}) and (\ref{eq:traceless_gauge_full}) then reduce to
\begin{subequations}
\label{eq:gauge_constraints}
\begin{gather}
    A^{ij}k_j = 0 \label{eq:transverse_gauge}\\
    \tensor{A}{^i_i} = 0 \label{eq:traceless_gauge}.
\end{gather}
\end{subequations}

Thus, the set of all gravitons is given by
\begin{equation}
    \Xi = \{(k, A^{ij})\}
\end{equation}
with $k \in \Lightcone$ and $A^{ij}$ symmetric and subject to the constraints (\ref{eq:gauge_constraints}). Denote by $\Xi(k_0)$ the fiber at $k_0$ consisting of the elements of $\Xi$ with $k = k_0$. The transverse gauge condition (\ref{eq:transverse_gauge}) imposes three independent linear constraints on the five-dimensional complex vector space $V_3$ of traceless symmetric rank-2 tensors on $\Comp^3$, and thus each fiber $\Xi(k_0)$ is a two-dimensional complex vector space. Furthermore, the transverse condition is smooth in $k$, and thus the fibers fit together to give $\Xi$ the form of a rank-2 complex vector bundle over $\Lightcone \cong \pspace$; we call $\Xi$ the graviton bundle. As vector bundles are topological objects, this formalism gives a concrete way to study the topology of gravitons. The essential problem is to determine the topology of $\Xi$ and its subbundles.

The most obvious topological question is the following: is $\Xi$ topologically trivial, that is, does $\Xi$ possess a global basis? It is always possible to form a local basis. For example, the plus-cross tensors
\begin{subequations}
\label{eq:B_z}
\begin{align}
    B_{z,+} =
    &\begin{pmatrix}
        1 & 0 & 0 \\
        0 & -1 & 0 \\
        0 & 0 & 0
    \end{pmatrix} \\
    B_{z,\times} = 
    &\begin{pmatrix}
        0 & 1 & 0 \\
        1 & 0 & 0 \\
        0 & 0 & 0
    \end{pmatrix},
\end{align}
\end{subequations}
are a commonly used basis at $\bfk_0 = \boldsymbol{e}_z = (0,0,1)$, and it is straightforward to smoothly extend this locally to a small neighborhood about $\boldsymbol{e}_z$ (e.g., by parallel transport via any connection). However, it is commonly assumed that this can be smoothly extended to a global basis \cite{Yu1999, Maggiore2005, Hu2021, Carney2024}. One explicit globalized choice which has appeared in the literature is \cite{Yu1999,Maggiore2005}
\begin{subequations}
\label{eq:B_plus_cross_Maggiore}
\begin{align}
    B_+(\bfk) &= \boldsymbol{u}(\bfk) \otimes \boldsymbol{u}(\bfk) - \boldsymbol{v}(\bfk)\otimes \boldsymbol{v}(\bfk) \label{eq:B_plus_Maggiore}\\
    B_\times(\bfk) &= \boldsymbol{u}(\bfk) \otimes \boldsymbol{v}(\bfk) + \boldsymbol{v}(\bfk)\otimes \boldsymbol{u}(\bfk) \label{eq:B_cross_Maggiore},
\end{align}
\end{subequations}
where $(\boldsymbol{u}(\bfk), \boldsymbol{u}(\bfk), \hat{\bfk})$ is supposed to be some orthonormal basis of $\Real^3$ at each $\bfk$. The issue is that it is not actually possible to smoothly choose such an orthonormal basis! Indeed, when restricted to the unit sphere $S^2$, $\boldsymbol{u}(\bfk)$ and $\boldsymbol{v}(\bfk)$ would be continuous nonvanishing tangent vector fields on $S^2$, violating the hairy ball theorem (\cite{Milnor1978}, Theorem 1). Thus, any basis $(B_+(\bfk),B_\times(\bfk))$ of the form (\ref{eq:B_plus_cross_Maggiore}) has singularities.

The reason why it is not obvious that \emph{some} smooth global basis for gravitons exists is that gravitons are massless. Indeed, for particles with mass $m > 0$, the momentum space is not the forward lightcone, but rather a mass hyperboloid
\begin{equation}
    M_m = \{k \in \Real^4 | k^\mu k_\mu = -m\} \cong \Real^3.
\end{equation}
Massive particles are thus described by vector bundles over the base manifold $M_m \cong \Real^3$ which is contractible to a point. Any vector bundle over a contractible manifold is a trivial bundle (\cite{HatcherVBKT}, Corollary 1.8), and therefore massive particles are topologically trivial and possess a global basis. However, for massless particles such as photons and gravitons, the momentum space is the topologically nontrivial $\Lightcone$; it cannot be contracted to a point due to the hole at $k = 0$. As such, there exist nontrivial bundles over $\Lightcone$. Indeed, the $R$ and $L$ circularly polarized photon bundles $\gamma_\pm$ were shown to be nontrivial \cite{PalmerducaQin_PT}. Similarly, we will show that $\Xi$ can be broken into two topologically nontrivial $R$ and $L$ circularly polarized subbundles $\Xi_\pm$, although they are topologically distinct from $\gamma_\pm$. Despite the nontriviality of $\Xi_\pm$, the total graviton bundle $\Xi$ actually is trivial, and we will construct a globally smooth basis for it, although it is more complicated than that described in (\ref{eq:B_plus_cross_Maggiore}).


\section{Decomposition of the graviton bundle into elementary particles}\label{sec:Poincare_symmetry}
Although elementary particles are conventionally defined as unitary irreducible representations (UIRs) of the proper orthochronous Poincar\'{e} group $\poincare$ on a Hilbert space \cite{Wigner1939, Weinberg1995}, they can be equivalently defined as UIRs of $\ISO^+(3,1)$ on vector bundles \cite{Simms1968, Asorey1985, PalmerducaQin_PT}. Representations on vector bundles are the natural generalization of vector space representations, and describe symmetries of the underlying system. In particular, if E is a vector bundle over the base manifold $M$, then $E$ is a representation of the $\poincare$ with action $\Sigma$ if for each $L \in \poincare$, $\Sigma(L)$ is a linear isomorphism between the fibers $E(k)$ to $E(L k)$. Furthermore, it is required $\Sigma$ respect the group structure, that is, $\Sigma(L_1 L_2) = \Sigma(L_1)\Sigma(L_2)$ for any $L_1, L_2 \in \poincare$. Massless particles correspond to representations on bundles over the lightcone and massive particles to bundles over mass hyperboloids. Unlike massive particles which can correspond to higher rank (trivial) bundles, massless particles are always line bundles and can be nontrivial \cite{PalmerducaQin_PT}. We will show that $\Xi$ is a unitary vector bundle representation of $\poincare$, and find its decomposition into UIRs, that is, into elementary particles. 

The group $\ISO^+(3,1)$ consists of Lorentz transformations $\Lambda$, spacetime translations $a^\mu \in \mathbb{R}^4$, and compositions of the two. We define the action of $a$ on $(k,A^{\mu \nu}) \in \Xi$ by
\begin{equation}\label{eq:spacetime_action}
    \Sigma(a)(k,A^{\alpha \beta}) = (k, e^{ia^\mu k_\mu}A^{\alpha \beta}),
\end{equation}
reflecting the effect of spacetime translations in momentum space. Ideally, for a general Lorentz transformation $\Lambda$, $(k,A^{\alpha \beta})$ would transform covariantly:
\begin{equation}\label{eq:covariant_transform}
    (k,A^{\alpha \beta}) \mapsto (k', A'^{\alpha \beta})\doteq (\Lambda k, \tensor{\Lambda}{^\alpha_\mu}\tensor{\Lambda}{^\beta_\nu}A^{\mu \nu}).
\end{equation}
$(k', A')$ satisfies the transverse and traceless gauge conditions given by equations (\ref{eq:transverse_gauge_full}) and (\ref{eq:traceless_gauge_full}). However, when $\Lambda \not\in \SO(3)$, $A'$ generally has nonvanishing temporal components, violating the temporal gauge condition (\ref{eq:complete_gauge}) and implying that $(k', A') \not \in \Xi$. It is, however, always possible to find a gauge transformation which takes $A'$ to a physically equivalent $A''$ such that $A''$ satisfies the TT gauge constraints (\ref{eq:complete_gauge}) and (\ref{eq:gauge_constraints}) so that $(k',A'') \in \Xi$. Indeed, any $C^\alpha$ satisfying $C^\alpha k'_\alpha=0$ induces the gauge transformation \cite{Schutz2009}
\begin{equation}\label{eq:gauge_transform}
    A''^{\alpha \beta} = A'^{\alpha \beta} - iC^\alpha k'^\beta - i k'^\alpha C^\beta.
\end{equation}
Since $k'^\alpha k'_\alpha = C^\alpha k'_\alpha = 0$, $A''$ is still transverse and traceless:
\begin{subequations}
\begin{gather}
    A''^{\alpha \beta} k_\beta = 0 \\
    \tensor{{A''}}{^\alpha_\alpha} = 0.
\end{gather}
\end{subequations}
Imposing the temporal gauge condition $A''^{\alpha 0} = 0$ fixes $C^\alpha$:
\begin{equation}
    C^\alpha = \frac{i}{k'^0}\Big(-A'^{\alpha 0} + \frac{A'^{00}}{2k'^{0}} k'^\alpha \Big).
\end{equation}
Then $(k',A'') \in \Xi$, so the action of $\Lambda$ on $\Xi$ is given by
\begin{equation}\label{eq:Lorentz_action}
    \Sigma(\Lambda)(k,A) \rightarrow (\Lambda k, A'').
\end{equation}
We note that the action of a rotation $R \in \SO(3) \seq \poincare$ on $(k,A) \in \Xi(k)$ is simply given by
\begin{equation}\label{eq:rotation_action}
\Sigma(R)(k,A) = (Rk, RAR^{-1}),
\end{equation}
where $A$ and $R$ are both considered as three-by-three matrices.

To show that this action is unitary, we must first define a Hermitian product on $\Xi$. Consider the vector space $V_4$ consisting of symmetric, traceless rank-2 tensors on $\Comp^4$ which has the indefinite Hilbert-Schmidt product given by
\begin{align}\label{eq:indefinite_HS}
    \langle A^{\alpha \beta}, B^{\alpha \beta} \rangle = \frac{1}{2}(A^{*})^{\alpha \beta} B_{\beta \alpha}.
\end{align}
When restricted to $V_3$, the subspace of $V_4$ with $A^{\alpha 0 } = 0$, this product is positive-definite making $V_3$ into a Hilbert space. On $V_3$, the Hilbert-Schmidt product takes the form
\begin{equation}
    \langle A, B \rangle = \frac{1}{2}\trace(A^* B)
\end{equation}
By equipping every fiber of $\Xi$ with this inner product, $\Xi$ is made into Hermitian vector bundle. It is easy to see that the Poincar\'{e} action $\Sigma$ is unitary with respect to this product, that is, for $A,B \in \Xi(k)$:
\begin{equation}
    \langle \Sigma(\Lambda)A, \Sigma(\Lambda)B \rangle = \langle A,B \rangle.
\end{equation}
In particular, the inner product of $A$ and $B$ is a Lorentz scalar by (\ref{eq:indefinite_HS}), so it is preserved by the covariant transformation (\ref{eq:covariant_transform}). The subsequent gauge transformation (\ref{eq:gauge_transform}) preserves the product since $k'^\alpha k'_\alpha = C^\alpha k'_\alpha = 0$. We have thus proved the following result:
\begin{theorem}\label{thm:Xi_representation}
    $\Xi$ is a unitary vector bundle representation of $\poincare$ with action $\Sigma$ given by equations (\ref{eq:spacetime_action}) and (\ref{eq:Lorentz_action}).
\end{theorem}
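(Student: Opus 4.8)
The plan is to verify, in turn, the three conditions that make $\Xi$ a unitary vector bundle representation of $\poincare$ with the action $\Sigma$ of (\ref{eq:spacetime_action}) and (\ref{eq:Lorentz_action}): (i) for each group element $L$, $\Sigma(L)$ is a well-defined linear isomorphism $\Xi(k)\to\Xi(Lk)$; (ii) $\Sigma(L_1L_2)=\Sigma(L_1)\Sigma(L_2)$; and (iii) $\Sigma(L)$ is a fiberwise isometry for the Hermitian metric; smoothness of $\Sigma$ in $(L,k)$ then comes for free. For a pure translation $a$, equation (\ref{eq:spacetime_action}) multiplies $A^{\alpha\beta}$ by the unit-modulus phase $e^{ia^\mu k_\mu}$, so it trivially preserves the constraints (\ref{eq:gauge_constraints}) and is a unitary automorphism of $\Xi(k)$. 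For a Lorentz transformation $\Lambda$ I would argue in two stages, both already implicit in the text: the covariant map (\ref{eq:covariant_transform}) is linear in $A$ and, by invariance of $\eta$, carries a symmetric transverse-traceless $A$ at $k$ to a symmetric transverse-traceless $A'$ at $k'=\Lambda k$; then the gauge shift (\ref{eq:gauge_transform}) with the displayed $C^\alpha$ — a linear function of $A'$, hence of $A$ — yields $A''$ that is still symmetric, transverse, and traceless (because $k'^\alpha k'_\alpha = C^\alpha k'_\alpha = 0$) and in addition purely spatial, so $(k',A'')\in\Xi(k')$. Linearity of $\Sigma(\Lambda)$ follows since each stage is linear; bijectivity follows once the homomorphism property is in hand, since $\Sigma(\Lambda^{-1})$ is then a two-sided inverse, and all formulas make sense because $k'^0=(\Lambda k)^0>0$ everywhere on $\Lightcone$.

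The substantive step — and the one I expect to be the main obstacle — is the homomorphism identity for two Lorentz transformations, because $\Sigma(\Lambda)$ is \emph{not} the naive covariant transform but the composite "covariant transform, then gauge-fix", and one must check the gauge-fixings compose correctly. The clean way is to introduce the larger bundle $\widetilde{\Xi}$ over $\Lightcone$ whose fiber is all symmetric, traceless, transverse tensors (no temporal-gauge condition), observe that (\ref{eq:covariant_transform}) is literally a tensor-product representation restricted to this invariant subbundle, hence a genuine $\poincare$ action on $\widetilde{\Xi}$; note that the gauge transformations (\ref{eq:gauge_transform}) generated by $C$ with $C\cdot k=0$ act on each fiber of $\widetilde{\Xi}$ and that, because $k^0\neq0$, every gauge orbit contains a \emph{unique} representative with vanishing temporal components (this uniqueness is exactly the content of the closed-form $C^\alpha$); and conclude that $\Xi$ is equivariantly identified with the quotient of $\widetilde{\Xi}$ by gauge orbits, on which the covariant action descends. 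Since a group action composed with an equivariant identification is again a group action, $\Sigma$ is a homomorphism. Concretely, this amounts to checking that covariantly transforming by $\Lambda_1$ after gauge-fixing the $\Lambda_2$-image lands in the same gauge orbit as covariantly transforming by $\Lambda_1\Lambda_2$, so a single final gauge-fix gives the same $A''$ — true because the intermediate gauge-fix only moves within an orbit and $\Lambda_1$ maps orbits to orbits. The mixed cases (translation composed with Lorentz transformation, and translation composed with translation) are short computations using Lorentz invariance of $k^\mu x_\mu$ and $e^{i(a_1+a_2)^\mu k_\mu}=e^{ia_1^\mu k_\mu}e^{ia_2^\mu k_\mu}$.

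Finally I would fill in unitarity by expanding the sketch already given: the Hilbert–Schmidt pairing (\ref{eq:indefinite_HS}) on $V_4$ is formed by contracting with two copies of $\eta$, so it is a Lorentz scalar and is preserved by (\ref{eq:covariant_transform}); the subsequent gauge shift (\ref{eq:gauge_transform}) alters $\langle A',B'\rangle$ only by terms proportional to $k'^\alpha k'_\alpha$ and $C\cdot k'$, both of which vanish on $\Lightcone$; and the translation phase in (\ref{eq:spacetime_action}) cancels between the two slots of the pairing. On the subspace $V_3$ this pairing is the positive-definite $\tfrac12\trace(A^*B)$, so $\Xi$ is a Hermitian bundle and every $\Sigma(L)$ is a fiberwise isometry. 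Smoothness of $\Sigma$ as a bundle map is immediate, since (\ref{eq:spacetime_action}), (\ref{eq:covariant_transform}), the formula for $C^\alpha$, and (\ref{eq:gauge_transform}) are polynomial in the entries of $\Lambda$, $a$, $k$, $A$ and in $1/k'^0$, with $k'^0>0$ throughout $\Lightcone$. All of this is routine once the orbit/quotient picture for the composition law is set up, which is the only genuinely delicate point.
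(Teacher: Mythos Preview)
Your proposal is correct and, for well-definedness and unitarity, follows the paper's argument essentially verbatim: the paper's proof is the discussion immediately preceding the theorem statement, which constructs the gauge-fixing map, verifies $(k',A'')\in\Xi$, and checks unitarity exactly as you outline---by Lorentz invariance of the indefinite Hilbert--Schmidt form (\ref{eq:indefinite_HS}) together with $k'\cdot k'=C\cdot k'=0$ killing the cross terms introduced by the gauge shift.

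Where you genuinely go beyond the paper is the homomorphism identity $\Sigma(L_1L_2)=\Sigma(L_1)\Sigma(L_2)$. The paper does not address this step at all; it simply announces the theorem after the unitarity check. You correctly flag it as the only delicate point and resolve it by passing to the larger bundle $\widetilde{\Xi}$ of transverse-traceless (but not temporally gauged) tensors, on which the covariant action (\ref{eq:covariant_transform}) is manifestly a representation, and then identifying $\Xi$ with the quotient by gauge orbits. The required equivariance---that the covariant action carries gauge orbits to gauge orbits---holds because $\Lambda$ sends $C^\alpha k^\beta + k^\alpha C^\beta$ to $(\Lambda C)^\alpha(\Lambda k)^\beta + (\Lambda k)^\alpha(\Lambda C)^\beta$ with $(\Lambda C)\cdot(\Lambda k)=C\cdot k=0$, and the uniqueness of the temporal-gauge representative in each orbit is exactly the content of the closed-form $C^\alpha$. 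Your route buys a conceptually clean proof of the group law; the paper's route buys brevity at the cost of leaving that verification implicit.
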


This allows us to use the following theorem, proved in Ref. \cite{PalmerducaQin_PT}, to decompose $\Xi$ into UIRs. For $\bfk \in \Lightcone$, denote by $R_{\bfk}(\psi) \in \SO(3) \seq \poincare$, the rotation by angle $\psi$ about the $\bfk$ axis.
\begin{theorem}\label{thm:bundle_decomp}
    Let $\pi:E\rightarrow \mathcal{L}_+$ be a unitary vector bundle representation of $\mathrm{ISO}^+(3,1)$ of rank $r$ such that a spacetime translation by $a \in \mathbb{R}^4$ acts on $(k,v) \in E_k$ by $\Sigma_{a}(k,v) = (k,e^{ik^\mu a_\mu}v)$. Then $E$ decomposes as
    \begin{equation}\label{eq:E_decomp}
        E = E_1 \oplus \cdots \oplus E_r,
    \end{equation}
    where the $E_j$ are unitary irreducible line subbundles of $E$. Each $E_j$ has helicity $h_j$, meaning that for any $(k,v_j) \in E_j$, 
    \begin{equation}
        \Sigma(R_{\bfk}(\psi)) (k,v_j) = e^{-ih_j \psi}(k,v_j).
    \end{equation}
    If the $h_j$ are distinct, the decomposition (\ref{eq:E_decomp}) is unique.
\end{theorem}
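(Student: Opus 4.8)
The plan is to reduce the decomposition of $E$ to a representation-theoretic statement about the Wigner little group of a null momentum and then transport that statement globally using the homogeneous-space structure of $\Lightcone$. The Lorentz group $\SO^+(3,1)$ acts transitively on $\Lightcone$, and the stabilizer of a fixed null vector $k_0$ (with spatial part $\bfk_0$, which we may take of unit length) is the little group $\ISO(2) = \Real^2 \rtimes \SO(2)$, so $\SO^+(3,1) \to \Lightcone \cong \SO^+(3,1)/\ISO(2)$ is a principal $\ISO(2)$-bundle. I would first identify $E$ with the associated bundle $\SO^+(3,1) \times_{\ISO(2)} E_{k_0}$: the map $[L,v] \mapsto \Sigma(L)v$ is well defined (since $\ISO(2)$ fixes $k_0$), fiberwise linear, fiberwise bijective, and smooth, hence an isomorphism of bundle representations. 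This turns the problem into decomposing the single finite-dimensional unitary $\ISO(2)$-representation $E_{k_0}$.

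On that fiber I would run the standard Wigner argument. The abelian subgroup $\Real^2 \seq \ISO(2)$ acts unitarily, hence $E_{k_0}$ splits into weight spaces for the characters of $\Real^2$; but $\SO(2)$ conjugates any nonzero such character into a whole circle of distinct characters, which cannot fit in a finite-dimensional space, so $\Real^2$ must act trivially. The $\ISO(2)$-action on $E_{k_0}$ therefore factors through $\SO(2)$, and unitarity yields $E_{k_0} = \bigoplus_h E_{k_0}^{(h)}$, where $R_{\bfk_0}(\psi)$ acts on $E_{k_0}^{(h)}$ by $e^{-ih\psi}$, $h\in\mathbb{Z}$. Because $\Real^2$ acts trivially, \emph{every} line in each $E_{k_0}^{(h)}$ is $\ISO(2)$-invariant; choosing an orthonormal basis $e_1,\dots,e_r$ of $E_{k_0}$ subordinate to the helicity grading ($e_j \in E_{k_0}^{(h_j)}$) thus writes $E_{k_0}$ as an orthogonal sum of $r$ one-dimensional $\ISO(2)$-subrepresentations $\Comp e_j$.

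Applying the associated-bundle construction to this splitting gives $E = E_1 \oplus \cdots \oplus E_r$ with $E_j := \SO^+(3,1) \times_{\ISO(2)} \Comp e_j$, each a smooth line subbundle. By construction $E_j$ is preserved by the Lorentz action, and since the hypothesis makes each spacetime translation $\Sigma(a)$ act on every fiber as the scalar $e^{ik^\mu a_\mu}$, it preserves $E_j$ as well; hence each $E_j$ is a subrepresentation of the full $\poincare$ inheriting a unitary action, and — being of rank one — it has no proper nonzero subbundle and is irreducible. To extract the helicity I would use that for $\bfk = L\bfk_0$ one may take the representative $L = B\rho$, with $\rho\in\SO(3)$ carrying $\bfk_0$ to $\hat{\bfk}$ and $B$ a boost along $\hat{\bfk}$; then $R_\bfk(\psi)\,L = L\,R_{\bfk_0}(\psi)$, and evaluating on $[L,e_j]$ gives $\Sigma(R_\bfk(\psi))(k,v_j) = e^{-ih_j\psi}(k,v_j)$ for all $(k,v_j)\in E_j$, so $E_j$ has helicity $h_j$. (A concrete alternative description of the helicity-$h$ part, convenient for checking smoothness directly, is the spectral projection $P_h(k) = \tfrac{1}{2\pi}\int_0^{2\pi} e^{ih\psi}\,\Sigma(R_\bfk(\psi))\,d\psi$, which is smooth precisely because $\bfk \mapsto R_\bfk(\psi)$ is smooth on $\Real^3\setminus\{0\}$ — one more place where the hole in $\Lightcone$ is essential — and has constant rank by connectedness of $\Lightcone$, with image $E^{(h)} = \bigoplus_{j:\,h_j=h} E_j$.)

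For uniqueness when the $h_j$ are distinct: any irreducible line subbundle $\ell \seq E$ has fibers $\ell_k$ that are one-dimensional and invariant under $\Sigma(R_\bfk(\psi))$ (as $R_\bfk(\psi)$ fixes $k$), hence contained in a single $E_k^{(h)}$, with $h$ constant in $k$ by connectedness; so any decomposition of $E$ into irreducible line subbundles refines $E = \bigoplus_h E^{(h)}$, and when the $h_j$ are pairwise distinct each $E^{(h)}$ is already a line bundle, forcing the summands to be exactly the nonzero $E^{(h)}$. (When a helicity repeats, the freedom in choosing the $e_j$ within $E_{k_0}^{(h)}$ shows the decomposition is genuinely non-unique.) The step I expect to be the crux is exactly this passage from the fiberwise splitting to globally defined \emph{line} subbundles when helicities coincide: the canonical spectral projections only isolate the possibly higher-rank eigenbundles $E^{(h)}$, and one genuinely needs the associated-bundle presentation of $E$ to know that an arbitrary refinement of $E_{k_0}^{(h)}$ into lines propagates to a smooth subbundle decomposition.
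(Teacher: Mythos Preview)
The paper does not actually prove this theorem; immediately before stating it the authors write ``the following theorem, proved in Ref.~\cite{PalmerducaQin_PT},'' and then use it as a black box to obtain Theorem~\ref{thm:RL_subbundles}. There is therefore no in-paper proof to compare against.

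That said, your argument is the standard and correct one. Identifying $E$ with the associated bundle $\SO^+(3,1)\times_{\ISO(2)}E_{k_0}$ reduces the problem to decomposing a single finite-dimensional unitary $\ISO(2)$-representation; your Wigner-style argument that the $\Real^2$ factor acts trivially is clean, and the $\SO(2)$ character decomposition then furnishes the helicity grading. The associated-bundle construction is exactly what one needs to promote a \emph{choice} of line decomposition inside a repeated-helicity block $E_{k_0}^{(h)}$ to a global smooth line subbundle, so your identification of that step as the crux is accurate. Your helicity computation via $R_{\bfk}(\psi)\,L = L\,R_{\bfk_0}(\psi)$ for $L = B\rho$ is correct (boost and rotation along the same axis commute, and $\rho$ conjugates $R_{\bfk_0}(\psi)$ to $R_{\bfk}(\psi)$), and the uniqueness argument is fine. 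The alternative description via the spectral projector $P_h(k)$ is a nice touch and, as you note, gives an independent proof that the coarse helicity decomposition $E=\bigoplus_h E^{(h)}$ is a smooth subbundle decomposition. I see no genuine gaps.
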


\begin{theorem}[R and L circularly polarized subbundles]\label{thm:RL_subbundles}
    The graviton bundle $\Xi$ decomposes as
    \begin{equation}
        \Xi = \Xi_+ \oplus \Xi_- \label{eq:RL_decomp}
    \end{equation}
    where $\Xi_\pm$ are unique UIR line bundle representations of $\poincare$ with helicities $\pm 2$. They are thus elementary particles. Elements of $A \in \Xi_\pm(k)$ have the form
    \begin{equation}\label{eq:real_decomp}
        A = c(B_1 + iB_2)
    \end{equation}
    where $c \in \Comp$ and $B_1,B_2 \in \Xi(k)$ are real and orthonormal. We call $\Xi_+$ and $\Xi_-$ the R and L circularly polarized bundles in analogy with the R and L photons.
\end{theorem}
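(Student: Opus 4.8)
The plan is to read off the decomposition and its properties from Theorems~\ref{thm:Xi_representation} and~\ref{thm:bundle_decomp} and then pin down the helicities and the fiberwise structure by an explicit computation at the north pole $\bfk_0 = \zhat$. First, by Theorem~\ref{thm:Xi_representation} the rank-2 bundle $\Xi$ is a unitary vector bundle representation of $\poincare$, and by (\ref{eq:spacetime_action}) its translation action is of the form assumed in Theorem~\ref{thm:bundle_decomp}; applying the latter yields a splitting $\Xi = E_1 \oplus E_2$ into unitary irreducible line subbundles carrying definite helicities $h_1$ and $h_2$.

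Second, I would identify $h_1$ and $h_2$ by computing the rotation action at $\bfk_0 = \zhat$. There the transverse condition (\ref{eq:transverse_gauge}) forces $A^{i3}=0$, so $\Xi(\zhat)$ is spanned by the plus--cross tensors $B_{z,+}$ and $B_{z,\times}$ of (\ref{eq:B_z}). By (\ref{eq:rotation_action}) the rotation $R_{\zhat}(\psi)$ acts on $\Xi(\zhat)$ by conjugation $A \mapsto R_{\zhat}(\psi)\, A\, R_{\zhat}(\psi)^{-1}$; expressing this in terms of the complex null vectors $\boldsymbol{e}_\pm = (\xhat \pm i\,\boldsymbol{e}_y)/\sqrt{2}$, which satisfy $R_{\zhat}(\psi)\boldsymbol{e}_\pm = e^{\mp i\psi}\boldsymbol{e}_\pm$, one gets $\tfrac12(B_{z,+} \pm i B_{z,\times}) = \boldsymbol{e}_\pm \otimes \boldsymbol{e}_\pm \mapsto e^{\mp 2 i\psi}\,\boldsymbol{e}_\pm \otimes \boldsymbol{e}_\pm$. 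Since each $E_j(\zhat)$ is a line on which $\Sigma(R_{\zhat}(\psi))$ acts as the scalar $e^{-i h_j \psi}$ for all $\psi$, and for generic $\psi$ the only such lines are the spans of $B_{z,+} + iB_{z,\times}$ (eigenvalue $e^{-2i\psi}$) and $B_{z,+} - iB_{z,\times}$ (eigenvalue $e^{+2i\psi}$), we conclude $\{h_1, h_2\} = \{+2, -2\}$. Relabel the summands as $\Xi_\pm$ according to helicity $\pm 2$; the helicities are distinct, so the uniqueness clause of Theorem~\ref{thm:bundle_decomp} applies, and since $\Xi_\pm$ are unitary irreducible bundle representations of $\poincare$ they are elementary particles by definition.

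Third, I would establish the fiberwise form (\ref{eq:real_decomp}). Because the constraint (\ref{eq:transverse_gauge}) is homogeneous in $\bfk$ and the rotation $R_{\bfk}(\psi)$ about a radial axis depends only on $\hat{\bfk}$, one has $\Xi_\pm(|\bfk|\,\zhat) = \Xi_\pm(\zhat) = \mathrm{span}_\Comp\{B_{z,+} \pm i B_{z,\times}\}$. For arbitrary $\bfk$ choose $R \in \SO(3)$ with $R(|\bfk|\,\zhat) = \bfk$; then $\Sigma(R)$ restricts to an isomorphism $\Xi_\pm(|\bfk|\zhat) \to \Xi_\pm(\bfk)$, so by (\ref{eq:rotation_action}) every $A \in \Xi_\pm(\bfk)$ equals $c\,(B_1 \pm i B_2)$ with $B_1 = R B_{z,+} R^{-1}$ and $B_2 = R B_{z,\times} R^{-1}$; these are real symmetric traceless matrices annihilating $\bfk$, hence lie in $\Xi(\bfk)$, and they are orthonormal since conjugation by $R$ is an isometry of the Hilbert--Schmidt product while a direct trace computation gives $\langle B_{z,+}, B_{z,+}\rangle = \langle B_{z,\times}, B_{z,\times}\rangle = 1$ and $\langle B_{z,+}, B_{z,\times}\rangle = 0$. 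This is exactly (\ref{eq:real_decomp}). I expect the main obstacle to be the helicity computation in the second step --- getting the factor of $2$ and the signs right --- together with the accompanying argument that the abstract subbundles delivered by Theorem~\ref{thm:bundle_decomp} coincide with the rotation-eigenline subbundles identified by this local calculation; the rest is routine bookkeeping with the $\SO(3)$ action.
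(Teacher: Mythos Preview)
Your proposal is correct and follows essentially the same route as the paper: invoke Theorems~\ref{thm:Xi_representation} and~\ref{thm:bundle_decomp} to get the splitting, pin down the helicities by computing the $R_{\zhat}(\psi)$-eigenlines in $\Xi(\zhat)$, and then transport the real orthonormal pair $(B_{z,+},B_{z,\times})$ to an arbitrary fiber to obtain (\ref{eq:real_decomp}). The only cosmetic differences are that you diagonalize the rotation via the null vectors $\boldsymbol{e}_\pm$ rather than by a direct matrix check, and you transport using a rotation $R\in\SO(3)$ (where reality and orthonormality of $RB_{z,+}R^{-1}$, $RB_{z,\times}R^{-1}$ are manifest) whereas the paper uses a general Lorentz transformation $L_{\bfk}$.
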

\begin{proof}
    By Theorems \ref{thm:Xi_representation} and \ref{thm:bundle_decomp}, $\Xi$ decomposes into two bundles
    \begin{equation}
        \Xi = \Xi_+ \oplus \Xi_-
    \end{equation}
    with helicities $h_+$ and $h_-$, respectively. To show that $h_\pm = \pm 2$, it suffices to consider any fixed fiber, say $\Xi(\zhat)$. The elements 
    \begin{subequations}
    \begin{align}
        A_{z,\pm} &= \frac{1}{\sqrt{2}}(B_{z,+} \pm i B_{z,-}) \\
        &= \frac{1}{\sqrt{2}}
        \begin{pmatrix}
            1 & \pm i & 0 \\
            \pm i & -1 & 0 \\
            0 & 0 & 0
        \end{pmatrix}
    \end{align}
    \end{subequations}
    of the fiber $\Xi(\boldsymbol{e}_z)$ satisfy
    \begin{equation}
        \Sigma(R_{\boldsymbol{e}_z}(\psi))A_{z,\pm} =  e^{\mp2i\psi}A_{z,\pm},
    \end{equation}
    for any $\psi$, showing that $h_\pm = \pm 2$. 
    
    Now let $L_{\bfk}$ be some Lorentz transformation taking $\zhat$ to $\bfk$. Since $\Xi_\pm$ are $\poincare$ symmetric,
    \begin{equation}
        \Sigma(L_{\bfk})A_{z,\pm} = \frac{1}{\sqrt{2}}\big(\Sigma(L_{\bfk})B_{z,+} \pm i \Sigma(L_{\bfk})B_{z,\times} \big)
    \end{equation}
    span the fibers $\Xi_\pm(\bfk)$. The terms $\Sigma(L_{\bfk})B_{z,+}$ and $\Sigma(L_{\bfk})B_{z,\times}$ are real and orthonormal, so the elements of $\Xi_\pm$ have the form (\ref{eq:real_decomp}).
\end{proof}

\section{The topology of the total and circularly polarized graviton bundles} \label{sec:topology_of_gravitons}

In this section we characterize the topology of $\Xi$ and $\Xi_\pm$. While we mostly use the shorthand $\Xi$ for the graviton bundle, technically the bundle is $\pi: \Xi \rightarrow \Lightcone$ where $\pi$ is the projection $(k,A) \mapsto k$ onto the base manifold. It is often more convenient to work with the bundle obtained by restricting the base manifold to the unit sphere $S^2 \seq \pspace \cong \Lightcone$. Since $\Lightcone$ deformation retracts onto $S^2$, bundles over $\Lightcone$ are functorially isomorphic to those over $S^2$ (\cite{HatcherVBKT}, Corollary 1.8). In particular, if $j:S^2 \hookrightarrow \Lightcone$ is the inclusion and $r:\Lightcone \rightarrow S^2$ is the retraction $\bfk \mapsto \hat{\bfk}$, then the pullback map $j^*$ sends bundles over $\Lightcone$ to bundles over $S^2$ and has inverse $r^*$. If $E$ is a bundle over $\Lightcone$, its restriction to $S^2$ is given by the pullback bundle $E|_{S^2} = j^*E$. Since $S^2$ is even-dimensional and compact, complex bundles over $S^2$ possess Chern numbers as topological invariants \cite{Tu2017differential}. Bundles over odd-dimensional or noncompact manifolds do not generally possess Chern numbers, but the isomorphism $j^*$ allows Chern numbers to be assigned to bundles over $\Lightcone$ such as $\Xi$. For example, the first Chern number of $\Xi$ is given by
\begin{equation}
    C_1(\Xi) \doteq  C_1(j^* \Xi) = C_1(\Xi|_{S^2}).
\end{equation}

Complex vector bundles over $S^2$ with rank $1$ or $2$ are completely classified up to isomorphism by their first Chern number $C_1$ \cite{McDuff2017,PalmerducaQin_PT}, and thus the same is true of such bundles over $\Lightcone$ by the equivalence induced by $r^*$ and $j^*$. This allows us to fully classify the graviton bundles.
\begin{theorem}[The topology of gravitons]\label{thm:graviton_topology}
    The total graviton bundle $\Xi$ is topologically trivial, while the circularly polarized bundles $\Xi_\pm$ are topologically nontrivial with Chern numbers $\mp 4$. 
\end{theorem}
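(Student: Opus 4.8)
The plan is to compute the Chern numbers of $\Xi$, $\Xi_+$, and $\Xi_-$ directly, exploiting the splitting $\Xi = \Xi_+ \oplus \Xi_-$ from Theorem \ref{thm:RL_subbundles} together with the Whitney sum formula $C_1(\Xi) = C_1(\Xi_+) + C_1(\Xi_-)$. Since the helicities $\pm 2$ are distinct, the splitting is the unique one with those helicities, and it suffices to identify each line bundle $\Xi_\pm$ up to isomorphism; as complex line bundles over $S^2$ are classified by $C_1$, I only need one integer per bundle. First I would pull everything back to $S^2$ and produce explicit local sections of $\Xi_\pm$ on the two standard charts (northern and southern hemispheres, or $S^2$ minus a pole). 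A natural way to do this: start from the explicit fiber elements $A_{z,\pm} = \tfrac{1}{\sqrt2}(B_{z,+}\pm i B_{z,\times})$ at $\zhat$ and transport them to a generic $\bfk$ by the rotation action (\ref{eq:rotation_action}), $A_\pm(\bfk) = R_{\bfk} A_{z,\pm} R_{\bfk}^{-1}$, where $R_{\bfk}\in\SO(3)$ is a smoothly chosen rotation taking $\zhat$ to $\hat{\bfk}$ — such a choice exists smoothly on $S^2$ minus one point. Comparing the two such sections obtained from rotations about different axes on the overlap yields the transition function, whose winding number is $C_1(\Xi_\pm)$.

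The key computational step is to evaluate that winding. The cleanest route is to recall that the $R$ and $L$ \emph{photon} bundles $\gamma_\pm$ have $C_1(\gamma_\pm) = \mp 2$ (Ref. \cite{PalmerducaQin_PT}), and to relate the graviton bundles to them. Concretely, a helicity-$\pm2$ polarization tensor can be built as the symmetric square of a helicity-$\pm1$ photon polarization vector: if $\boldsymbol{\epsilon}_\pm(\bfk)$ is the circularly polarized photon vector at $\bfk$ (a local section of $\gamma_\pm$), then $\boldsymbol{\epsilon}_\pm(\bfk)\otimes\boldsymbol{\epsilon}_\pm(\bfk)$ is transverse ($A^{ij}k_j = 0$), symmetric, and traceless (since $\boldsymbol{\epsilon}_\pm\cdot\boldsymbol{\epsilon}_\pm = 0$ for a circular polarization), hence a local section of $\Xi_\pm$, and it transforms under $R_{\bfk}(\psi)$ by $e^{\mp 2i\psi}$, matching the helicity. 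This realizes a bundle isomorphism $\Xi_\pm \cong \gamma_\pm \otimes \gamma_\pm$, so by multiplicativity of $C_1$ under tensor product of line bundles, $C_1(\Xi_\pm) = 2\,C_1(\gamma_\pm) = \mp 4$. Then $C_1(\Xi) = C_1(\Xi_+) + C_1(\Xi_-) = -4 + 4 = 0$, and since rank-2 bundles over $S^2$ are classified by $C_1$, $\Xi$ is trivial.

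I expect the main obstacle to be making the isomorphism $\Xi_\pm \cong \gamma_\pm\otimes\gamma_\pm$ airtight rather than heuristic: one must check that $\boldsymbol{\epsilon}_\pm\otimes\boldsymbol{\epsilon}_\pm$ is nonvanishing wherever $\boldsymbol{\epsilon}_\pm$ is (clear), that the assignment is equivariant for the full $\poincare$ action and not merely for $\SO(3)$ — which requires tracking how the gauge-fixing transformation (\ref{eq:gauge_transform}) interacts with the tensor square — and that the map is fiberwise a linear isomorphism of line bundles, so that the two bundles are genuinely isomorphic and not merely Chern-equivalent by coincidence. An alternative, more self-contained route that avoids invoking $\gamma_\pm$ is to directly compute the transition function of $\Xi_\pm$ from the two rotation-transported sections and extract its winding number; this reduces to an explicit $\SO(3)$ computation of how $R_{\bfk}A_{z,\pm}R_{\bfk}^{-1}$ picks up a phase $e^{\mp 2 i \theta(\bfk)}$ relative to a second trivialization, where $\theta$ is the familiar azimuthal transition angle on $S^2$ whose winding is $1$; the factor $2$ in the exponent (versus $1$ for photons) comes from conjugation acting on rank-2 tensors, and gives $C_1(\Xi_\pm)=\mp(2\cdot 2)=\mp 4$. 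Either way, once the Chern numbers of $\Xi_\pm$ are pinned down, the triviality of $\Xi$ and the Whitney-sum bookkeeping are immediate.
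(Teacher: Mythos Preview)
Your proposal is correct but takes a genuinely different route from the paper. The paper computes $C_1(\Xi_\pm)$ by a direct Chern--Weil calculation: it writes down almost-global sections $A_\pm(\theta,\phi)=\Sigma\big(R_{\zhat}(\phi)R_{\xhat}(\theta)\big)A_{z,\pm}$, computes the Berry connection $\sigma_\pm = \langle A_\pm, dA_\pm\rangle = \mp 2i\cos\theta\, d\phi$ and curvature $\Omega_\pm = \pm 2i\, dS$, and integrates over $S^2$ to obtain $C_1(\Xi_\pm) = \mp 4$; triviality of $\Xi$ then follows from additivity and the classification of rank-$2$ bundles over $S^2$ by $C_1$, exactly as you conclude. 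Your main route---realizing $\Xi_\pm \cong \gamma_\pm^{\otimes 2}$ via $\boldsymbol{\epsilon}_\pm \otimes \boldsymbol{\epsilon}_\pm$ and invoking $C_1(\gamma_\pm)=\mp 2$---is more conceptual and has the virtue of \emph{explaining} the relation $C_1=-2h$ that the paper only states as a conjecture for general massless particles; the paper's curvature computation is in turn fully self-contained and does not lean on the photon result. Your worry about full $\poincare$ equivariance is stronger than necessary: for the Chern number you only need a topological line-bundle isomorphism, and matching the helicity (the $R_{\bfk}(\psi)$ eigenvalue) already identifies which of $\Xi_+$, $\Xi_-$ is hit. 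Your alternative transition-function route is also sound; in fact the paper later computes precisely the clutching map $T(\phi)=\mathrm{diag}(e^{4i\phi},e^{-4i\phi})$ when building a global basis of $\Xi$, which encodes the same winding numbers $\mp 4$, though it does not use this to prove the present theorem.
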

\begin{proof}
    We will determine $C_1({\Xi}_\pm)$ via the Berry connection induced by the Hilbert-Schmidt product. To do so, we need smooth bases of the restricted bundles $\Xi_\pm|_{S^2}$. It is too much to ask for global bases of $\Xi_\pm|_{S^2}$ since we will find that these bundles are topologically nontrivial. However, we can find smooth bases almost everywhere and this is sufficient to calculate the Chern number. We will obtain such a basis by applying rotations to $A_{z,\pm}$. In particular, define
    \begin{equation}
    A_\pm(\theta,\phi) = \Sigma\Big(R_{\zhat}(\phi)R_{\xhat}(\theta)\Big)A_{z,\pm}
    \end{equation}
    for $\theta \in (0,\pi)$ and $\phi \in [0,2\pi)$. Then with $$\hat{\bfk} = (\sin \theta \cos \phi, \sin \theta \sin \phi, \cos \theta),$$ we have that
    \begin{equation}
        A_\pm(\theta, \phi) \in \Xi_\pm(\hat{\bfk})
    \end{equation}
    except at the poles where $A_\pm(\theta, \phi)$ would become discontinuous. We can thus express the Berry connections on $\Xi_\pm|_{S^2}$ almost everywhere in these coordinates \cite{Frankel2011}: 
\begin{subequations}
\begin{align}
    \sigma_\pm(\theta,\phi) &= \langle A_\pm(\theta,\phi), dA_\pm(\theta,\phi)\rangle \\
    &= \mp 2 i \cos \theta d\phi
\end{align}
\end{subequations}
The Berry curvatures are then
\begin{subequations}
\begin{align}
    \Omega_\pm &= d\sigma_\pm + \sigma_\pm \wedge \sigma_\pm \\
    &= \pm 2i \sin \theta d\theta \wedge d\phi \\
    &= \pm 2 i \, dS \label{eq:Berry_curvature}.
\end{align}
\end{subequations}
where $dS$ is the spherical area element of $S^2$. Note that while our calculation excluded the poles, the Berry curvature is a globally defined and smooth 2-form \cite{Tu2017differential}, and thus (\ref{eq:Berry_curvature}) must hold globally. The Chern numbers can then be calculated \cite{Tu2017differential}:
\begin{subequations}
\begin{align}
    C_1(\Xi_\pm) &= C_1(\Xi_\pm|_{S^2}) \\
    &= \frac{i}{2\pi} \int_{S^2} \Omega_\pm \\
    &= \mp 4.
\end{align}
\end{subequations}
The first Chern number is additive, so by (\ref{eq:RL_decomp})
\begin{equation}
    C_1(\Xi) = C_1(\Xi_+) + C_1(\Xi_-) = 0.
\end{equation}
Since the trivial rank-2 vector bundle over $\Lightcone$ also has vanishing first Chern number, $\Xi$ is isomorphic to the trivial bundle.
\end{proof}
Thus, the $R$ and $L$ gravitons are topologically nontrivial while the total graviton bundle is trivial. It is interesting to find that the $R$ and $L$ gravitons have different topology than the $R$ and $L$ photons, as the former have Chern number $\mp 4$ while the latter have Chern numbers $\mp 2$ \cite{PalmerducaQin_PT}. As the Chern number can be interpreted as a measure of how nontrivial or ``twisted'' a line bundle is \cite{Bott2013}, this shows that gravitons are more twisted than the photons. It is clear that photons and gravitons have different geometry; this is reflected in their helicities $\pm 1$ and $\pm 2$, respectively. However, it is important to note that a priori topology and geometry are distinct and reflect different mathematical structures, in this case the vector bundle structures and particular Poincar\'{e} symmetries of those bundles. There are certainly relationships between geometry and topology, but they take the form of deep results such as the Chern-Weil homomorphism and the Gauss-Bonnet theorem \cite{Tu2017differential}. We previously proved that vector bundle representations with the same helicity are topologically equivalent (\cite{PalmerducaQin_PT}, Theorem 35). Although we did not generally establish that vector bundles with different helicities are topologically distinct, the photon and graviton cases support this conjecture. Moreover, they suggest the general relationship $C_1 = -2h$.

We also remark that while the Berry connection was used to calculate $C_1$, there is no special relationship between the Berry connection and Chern numbers. Indeed, we could have used any connections on $\Xi$ and $\Xi_\pm$ (and there are infinitely many) in this calculation and obtained the same Chern numbers. The Hilbert-Schmidt product and its induced connection are geometric, whereas the Chern numbers are purely topological quantities and are independent of the various geometries one might impose on the vector bundle. 

Before proceeding, we note an alternative, faster method of establishing the topological triviality of $\Xi$. The key observation is that $\Xi$ is the complexification of a real vector bundle $\Xi^\Real$. Indeed, the transverse gauge condition (\ref{eq:transverse_gauge}) can be applied to the real vector space $V^\Real_3$ of symmetric traceless rank-2 tensors on $\Real^3$, generating a rank-2 real vector bundle $\Xi^\Real$. It is easy to see that $\Xi$ is the complexification of $\Xi^\Real$, that is, 
\begin{equation}
    \Xi = \Xi^\Real \otimes_\Real \Comp.
\end{equation}
The odd Chern numbers of complexifications of real vector bundles vanish \cite{Milnor1974,PalmerducaQin_PT}, thus
\begin{equation}
    C_1(\Xi) = 0.
\end{equation}
Again, since $C_1$ fully classifies rank-2 complex vector bundles over $\Lightcone$, this proves that $\Xi$ is a trivial bundle. The same method was used to establish the triviality of the photon bundle \cite{PalmerducaQin_PT}.

\section{Construction of a global basis for gravitons}\label{sec:global_basis}
By Theorem \ref{thm:graviton_topology}, $\Xi$ is topologically trivial and thus there exists a smooth global basis for gravitons. However, neither of the proofs given in the previous section suggest a method to construct such a basis. In this section, we use the an adapted version of the clutching construction from algebraic topology to explicitly construct a globally smooth basis. The clutching construction is a general method of classifying vector bundles over any $n$-sphere $S^n$. It is based on the observation that although a sphere is topologically nontrivial, it can be split into two hemispheres which are trivial. Expositions of the clutching construction can be found in Refs. \cite{Atiyah_K_Theory, HatcherVBKT}. The modified method used here was developed by the authors to construct a global basis of photons \cite{PalmerducaQin_PT}.

We begin by splitting the two-sphere as $S^2 = D_U \cup D_L$ into the upper and lower closed hemispheres $D_U$ and $D_L$. Since $D_U$ and $D_L$ are contractible, the restricted bundles $\Xi|_{D_{U}}$ and $\Xi|_{D_{L}}$ are topologically trivial and therefore admit global bases. We construct such bases as follows. For each $\bfk = (\theta, \phi) \in D_U,$ define a rotation $R_{\bfk} \in \SO (3)$ which takes $\zhat$ to $\bfk$ via Rodrigues' rotation formula \cite{Dai2015}:
\begin{equation}\label{eq:Rodrigues}
    R_{\bfk} = I + \boldsymbol{K} \sin{\theta} + \boldsymbol{K}^2 (1-\cos{\theta})
\end{equation}
where
\begin{equation}
    \boldsymbol{K} = \begin{pmatrix}
        0 & 0 & \cos{\phi} \\
        0 & 0 & \sin{\phi}  \\
        -\cos{\phi} & -\sin{\phi} & 0
    \end{pmatrix}.
\end{equation}
Using $R_{\bfk}$, we can transport $A_{z,\pm}$ from $\zhat$ to all of $D_U$:
\begin{subequations}
\label{eq:A_U}
\begin{align}
    A_{U,1}(\bfk) &\doteq \Sigma(R_{\bfk})A_{z,+} \\
    A_{U,2}(\bfk) &\doteq \Sigma(R_{\bfk})A_{z,-}.
\end{align}
\end{subequations}
Since the action $\Sigma$ is unitary and the map $\bfk \mapsto R_{\bfk}$ from $D_U$ to $\SO(3)$ is smooth,
\begin{equation}
    \mathcal{F}_U = (A_{U,1} , A_{U,2})
\end{equation}
forms a smooth orthonormal basis of $\Xi|_{D_{U}}$. 
The $\SO(3)$ action $\Sigma$ defined by (\ref{eq:rotation_action}) naturally extends to an $\mathrm{O}(3)$ action. If we let $P = \mathrm{diag}(+1,+1,-1) \in \mathrm{O}(3)$ be the reflection across the $xy$ plane, then
\begin{subequations}
\label{eq:F_L}
\begin{align}
    \mathcal{F}_L &= (A_{L,1} , A_{L,2}) \\
    A_{L,1}(\bfk) &\doteq \Sigma(P)[A_{U,2}(P\bfk)] \\
    A_{L,2}(\bfk) &\doteq \Sigma(P)[A_{U,1}(P\bfk)]
\end{align}
\end{subequations}
is a smooth orthonormal basis of $\Xi|_{D_L}$; the order of the indices has been flipped for convenience. The bases $\mathcal{F}_{U,L}$ are smooth on their respective domains $D_{U,L}$, but they do not agree on the equator $S^1 \cong D_U \cap D_L$ where they overlap. Instead, for each $\bfk = (\cos(\phi),\sin(\phi),0)$ on the equator, there is a unitary transformation $T(\phi) \in \mathrm{U}(2)$ relating the two frames:
\begin{equation}
    \mathcal{F}_U(\phi) = T(\phi)\mathcal{F}_L(\phi).
\end{equation}
Direct calculation shows that
\begin{equation}
    T(\phi) = \begin{pmatrix}
        e^{4i\phi} & 0 \\
        0 & e^{-4i \phi}
    \end{pmatrix} \in \SU(2).
\end{equation}
$T:S^1 \rightarrow \SU(2) \seq \mathrm{U}(2)$ defines a loop in $\mathrm{U}(2)$. The clutching construction shows that the homotopy type of this loop fully classifies the vector bundle $\Xi|_{S^2}$ (and therefore $\Xi)$. In particular, this bundle is trivial if and only if $T$ it is contractible to a point i.e. if $T$ is nullhomotopic \cite{Atiyah_K_Theory, HatcherVBKT}. Since $T$ is actually a loop in the simply connected subgroup $\SU(2)$ of $\mathrm{U}(2)$, it is necessarily nullhomotopic. We thus obtain a third proof that $\Xi$ is topologically trivial. The advantage of this approach is that it can be adapted to give an explicit construction of a global basis for $\Xi$. The strategy is to extend the domain of $T$ from the equator $S^1$ to all of $D_L$. We do this in two steps.

\subsection{Homotopy from $T$ to the identity}
Since $T$ is nullhomotopic, there exists a homotopy $\tilde{T}:[0,1] \times S^1 \rightarrow \SU(2)$ between
\begin{equation}
T(\phi) = 
    \begin{pmatrix}
        e^{4i\phi} & 0 \\
        0 & e^{-4i \phi}
    \end{pmatrix}
    \;\;\; \text{ and } \;\;\;
    \mathds{1} = \begin{pmatrix}
    1 && 0 \\
    0 && 1
    \end{pmatrix}.
\end{equation}
The map
\begin{equation}
    (x,y,z) \in S^2 \mapsto \begin{pmatrix}
    x+iy && z \\
    -z && x-iy
    \end{pmatrix} \in \mathrm{SU}(2)
\end{equation}
gives an embedding of $S^2$ in $\SU(2)$. Under this embedding, $T(\phi)$ corresponds to the loop around the equator traversed clockwise four times, while $\mathds{1}$ corresponds to the constant loop at the point $(1,0,0)$. A homotopy can be obtained by deforming the equator into $(1,0,0)$ as depicted in Fig. \ref{fig:homotopy}. In particular, we set
\begin{subequations}
\begin{align}
    x(t,\phi) &= \cos(4\phi) \cos^2\Big(\frac{\pi t}{2}\Big) + \sin^2 \Big(\frac{\pi t}{2} \Big) \label{eq:x}\\
    y(t,\phi) &= \sin(4\phi)\cos\Big(\frac{\pi t}{2}\Big) \label{eq:y}\\
    z(t,\phi) &= \sin^2 (2\phi) \, \sin(\pi t), \label{eq:z}
\end{align}
\end{subequations}
and define the homotopy $\tilde{T}:[0,1] \times S^1 \rightarrow \SU(2)$ via 
\begin{equation}\label{eq:T_homo}
    \tilde{T}(t,\phi) = \begin{pmatrix}
        x(t,\phi) + iy(t,\phi) && z(t,\phi) \\
        -z(t,\phi) && x(t,\phi) - iy(t,\phi)
    \end{pmatrix}.
\end{equation}
It will be more convenient to change the homotopy parameter from $t \in [0,1]$ to $\theta \in [\frac{\pi}{2}, \pi]$:
\begin{equation}\label{eq:T_theta}
    \bar{T}(\theta,\phi) = \begin{pmatrix}
        \bar{x}(\theta,\phi) + i\bar{y}(\theta,\phi) && \bar{z}(\theta,\phi) \\
        -\bar{z}(\theta,\phi) && \bar{x}(\theta,\phi) - i\bar{y}(\theta,\phi)
    \end{pmatrix}
\end{equation}
where
\begin{subequations}
\begin{align}
    \bar{x}(\theta,\phi) &= \cos(4\phi)\sin^2(\theta) + \cos^2(\theta), \\
    \bar{y}(\theta,\phi) &= \sin(4\phi)\sin(\theta), \\
    \bar{z}(\theta,\phi) &= -\sin^2(2\phi)\sin(2\theta).
\end{align}
\end{subequations}

\begin{figure}
    \centering
    \includegraphics[width=8.6cm]{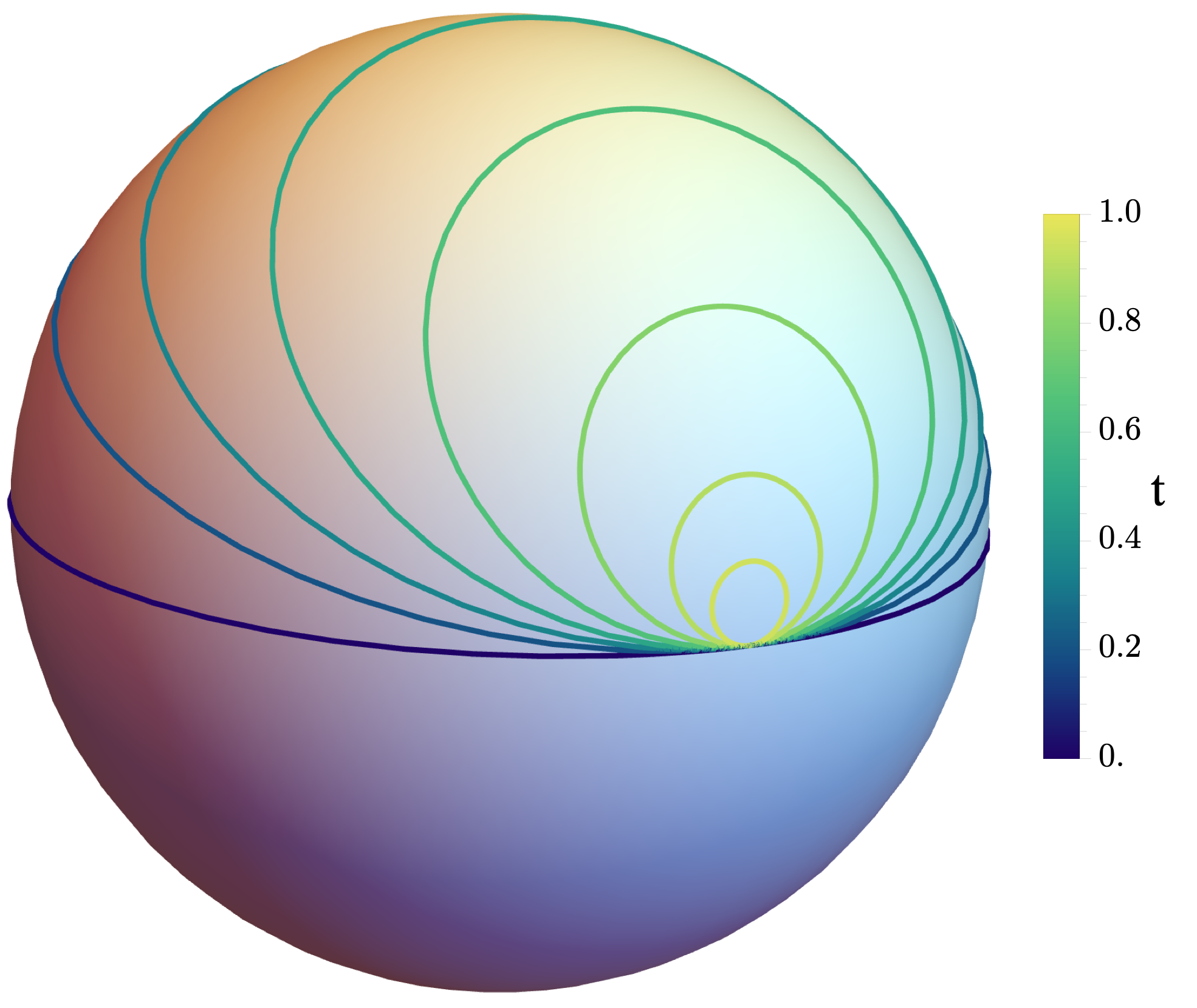}
    \caption{Illustration of the homotopy $\tilde{T}(t,\phi):[0,1]\times S^1 \rightarrow \mathrm{SU}(2)$ from $T(\phi)$ to the identity $\mathds{1}$. The image of $\tilde{T}$ resides on an embedding of $S^2$ in $\mathrm{SU}(2)$. The equator at $t=0$ corresponds to $T(\phi)$ and shrinks down to the identity $(1,0,0)$ at $t=1$.}
    \label{fig:homotopy}
\end{figure}

\subsection{Construction of a smooth frame of $\Xi|_{S^2}$}
Using the homotopy $\bar{T}$, we can patch together $\mathcal{F}_U$ and $\mathcal{F}_L$ into a continuous global basis $\mathcal{F}$ for $\Xi$:
\begin{equation}
    \mathcal{F}(\theta, \phi) = \begin{cases}
        \mathcal{F}_U(\theta, \phi) & \text{if } 0 \leq \theta < \frac{\pi}{2} \\
        \bar{T}(\theta, \phi) \mathcal{F}_L(\theta, \phi) & \text{if } \frac{\pi}{2} \leq \theta \leq \pi.
    \end{cases}
\end{equation}
The condition 
\begin{equation}\label{eq:homotopy_start_point}
    \bar{T}(\pi /2 , \phi) = T(\phi)
\end{equation}
ensures that $\mathcal{F}(\theta, \phi)$ is continuous along the equator, while $\bar{T}(\pi,\phi) = \mathds{1}$ ensures that no singularity occurs at the south pole. While $\mathcal{F}$ is continuous, it is not smooth since its $\theta$ derivatives jump abruptly at the equator. The last step is to use a smooth step function to smooth the frame at the equator.

To this end, let $g\in C^{\infty}(\mathbb{R})$ be any smooth monotonic function such that 
\begin{subequations}
\begin{gather}
        g(\theta \leq \pi/2)=\pi/2 \\
        g(\theta \geq \pi)=\pi \\
        g^{(n)}(\pi/2)=g^{(n)}(\pi)=0 \label{eq:g_deriv_0}
\end{gather}
\end{subequations}
for all $n\geq 1$. For concreteness, define
\begin{equation}
    h(t)= \begin{cases}
        e^{-1/t} & t > 0 \\
        0 & t \leq 0,
    \end{cases}
\end{equation}
and then set
\begin{equation}
    g(\theta) = \frac{\pi}{2}\Big(1 + \frac{h(\theta - \pi/2)}{h(\theta - \pi/2) + h(\pi - \theta)} \Big).
\end{equation}
We then define the frame
\begin{equation}\label{eq:smoothed_frame}
    \mathcal{F}_s(\theta, \phi) = \begin{cases}
        \mathcal{F}_U(\theta, \phi) & \text{if } 0 \leq \theta < \frac{\pi}{2} \\
        \bar{T}(g(\theta), \phi) \mathcal{F}_L(\theta, \phi) & \text{if } \frac{\pi}{2} \leq \theta \leq \pi
        \end{cases}
\end{equation}
which we claim is globally smooth. Indeed, it is obvious that $\mathcal{F}_s$ is smooth for $\theta \neq \frac{\pi}{2}$. It remains to show that
\begin{equation}
    \partial^m_\phi \partial^n_\theta \mathcal{F}_s\big|_{\theta = \frac{\pi}{2}^-} = \partial^m_\phi \partial^n_\theta \mathcal{F}_s\big|_{\theta = \frac{\pi}{2}^+}
\end{equation}
for all $m,n$. By (\ref{eq:homotopy_start_point}), (\ref{eq:g_deriv_0}), and (\ref{eq:smoothed_frame}), this is equivalent to the claim that
\begin{equation}\label{eq:smoothness_equiv}
    \partial^m_\phi \partial^n_\theta \mathcal{F}_U\big|_{\theta = \frac{\pi}{2}^-} = \partial^m_\phi \partial^n_\theta \Big[T(\phi)\mathcal{F}_L(\theta, \phi)\Big]_{\theta = \frac{\pi}{2}^+}.
\end{equation}
We now note that the equations (\ref{eq:A_U}-\ref{eq:F_L}) defining $\mathcal{F}_U$ and $\mathcal{F}_L$ smoothly extend to $0 < \theta < \pi$. It follows from direct calculation that
\begin{equation}
    \mathcal{F}_U(\theta, \phi) = T(\phi)\mathcal{F}_L(\theta, \phi)
\end{equation}
whenever both sides are defined, that is, when $\theta \neq 0, \pi$. This implies that (\ref{eq:smoothness_equiv}) holds, proving that $\mathcal{F}_s$ is smooth.

Although it is frequently assumed that such a global basis for gravitons exists \cite{Yu1999, Maggiore2005, Hu2021, Carney2024}, it appears this is the first explicit construction of such a basis. It allows an explicit decomposition of $\Xi$
\begin{equation}\label{eq:trivial_splitting_2}
    \Xi = \tau_1 \oplus \tau_2
\end{equation}
where $\tau_1$ and $\tau_2$ are the trivial line bundles over $\Lightcone$ generated by the basis $\mathcal{F}_s$.

We note that while the trivial splitting (\ref{eq:trivial_splitting_2}) may be useful in practical applications such as modeling gravitational waves, the subbundles $\tau_1$ and $\tau_2$ are not Poincar\'{e} symmetric and thus do not represent elementary particles. In other words, the splittings $\Xi = \tau_1 \oplus \tau_2$ and $\Xi = \Xi_+ \oplus \Xi_-$ both make sense topologically, but only the latter respects the Poincar\'{e} geometry of weak gravity.

\section{The nonexistence of linearly polarized subbundles}\label{sec:linear_polarizations}
Unlike the basis $\mathcal{B} = (B_+,B_\times)$ from equation (\ref{eq:B_plus_cross_Maggiore}), the basis $\mathcal{F}_s$ is globally smooth. However, $\mathcal{F}_s$ is significantly more complicated than $\mathcal{B}$. Indeed, the simplest types of polarizations are circular polarizations and linear polarizations. Circularly polarized vectors are those of the form $c(A_1 \pm i A_2)$ where $c \in \Comp$ and $A_1$ and $A_2$ are real and orthonormal to each other. While $\Xi$ uniquely decomposes into the sum of circularly polarized subbundles $\Xi_{\pm}$, these bundles are nontrivial and thus it is not possible to construct a globally smooth basis of circularly polarized gravitons. The basis $\mathcal{B}$ is an attempt to construct a linearly polarized basis, that is, a basis of real tensors; however, we showed that $\mathcal{B}$ necessarily has singularities and is not globally smooth. While this basis fails, it is natural to ask if it is possible to construct some other linearly polarized basis. $\mathcal{F}_s$ does not satisfy this criteria---while it is linearly polarized in the upper hemisphere, it is generally elliptically polarized in the southern hemisphere. In this section, we prove that it is impossible to construct a global linearly polarized basis for gravitons, but for a reason which is fundamentally different than the obstruction precluding a circularly polarized basis. While the bundles $\Xi_\pm$ are topologically nontrivial, they are still well-defined topological objects. In contrast, $\Xi$ possesses no linearly polarized subbundles, even topologically nontrivial ones.

To prove this, we must first be more precise about the definition of linearly polarized vectors and subbundles. Indeed,  if $A_0$ is real than it is certainly linearly polarized, however, $cA_0$ should also be considered linearly  polarized for any $c \in \Comp$ since $cA$ differs from $A_0$ only in a phase and scaling. Recall that $\Xi\cong \Xi^\Real \otimes_\Real \Comp$ is the complexification of the underlying real vector bundle $\Xi^\Real$. Recall also that in a tensor product $X \otimes Y$, simple tensors are those of the form $x \otimes y$ for $x \in X$ and $y \in Y$; general elements of $X \otimes Y$ are finite sums of simple tensors \cite{dummit2003}.
\begin{definition}[Linearly polarized vectors]
    $(k,A) \in \Xi \cong \Xi^\Real \otimes_\Real \Comp$ is linearly polarized if A is a simple tensor, that is, if $A = cA_0 = A_0 \otimes_\Real c$ for some $c \in \Comp$ and $A_0 \in \Xi^\Real$. Equivalently, $(k,A)$ is linearly polarized if the set $\{\rp(cA):c \in \Comp\}$ is a one-dimensional real vector space.
\end{definition}

\begin{definition}[Linearly polarized subbundles]
    A line bundle $\ell$ is a linearly polarized subbundle of $\Xi$ if every element of $\ell$ is linearly polarized. Equivalently, $\ell$ is linearly polarized if it is the complexification of a real line subbundle $\ell^\Real$ of $\Xi^\Real$.
\end{definition}
The main result of this section is the following theorem.
\begin{theorem}\label{Thm:no_linear_polarizations}
    $\Xi$ has no linearly polarized subbundles. As a result, there exists no smooth, linearly polarized global extension of the basis $(B_{z,+} ,B_{z,\times})$ at $\zhat$.
\end{theorem}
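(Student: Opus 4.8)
The plan is to reduce the statement to an Euler-number obstruction for real vector bundles over $S^2$. By the definition given above, a linearly polarized subbundle of $\Xi$ is precisely the complexification $\ell = \ell^\Real \otimes_\Real \Comp$ of a real line subbundle $\ell^\Real$ of the real rank-$2$ bundle $\Xi^\Real$, so it is enough to show that $\Xi^\Real$ admits no real line subbundle. Since $\Lightcone$ deformation retracts onto $S^2$ and restriction takes subbundles to subbundles, I would work with $\Xi^\Real|_{S^2}$, a real rank-$2$ bundle over a closed orientable surface.

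The central step will be to identify $\Xi^\Real$, as a real bundle, with $\Xi_+$ regarded as a real rank-$2$ bundle. Let $P_+ : \Xi \to \Xi_+$ be the smooth $\Comp$-linear projection attached to the decomposition $\Xi = \Xi_+ \oplus \Xi_-$ of Theorem~\ref{thm:RL_subbundles}, so that $\ker P_+ = \Xi_-$, and restrict it to the real subbundle $\Xi^\Real \subseteq \Xi$. I would show that $P_+|_{\Xi^\Real}$ is a fiberwise isomorphism. The crucial point is injectivity on fibers: if a real $A \in \Xi^\Real(k)$ lies in $\ker P_+ = \Xi_-(k)$, then by the normal form (\ref{eq:real_decomp}) one has $A = c(B_1 + iB_2)$ with $c \in \Comp$ and $B_1, B_2$ real and orthonormal, hence $\Real$-linearly independent; expanding $A$ into real and imaginary parts and imposing that $A$ is real forces $c = 0$, so $A = 0$. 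Since $\dim_\Real \Xi^\Real(k) = 2 = \dim_\Real \Xi_+(k)$, fiberwise injectivity upgrades to a fiberwise isomorphism, and smoothness is inherited from $P_+$. This would establish $\Xi^\Real \cong \Xi_+$ as smooth real rank-$2$ bundles over $\Lightcone$.

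The conclusion is then topological. Transporting the canonical orientation of the complex line bundle $\Xi_+$ across this isomorphism orients $\Xi^\Real$. Using that the underlying real bundle of a complex line bundle $E$ has Euler class equal to $c_1(E)$ \cite{Milnor1974}, together with Theorem~\ref{thm:graviton_topology}, the Euler number of $\Xi^\Real|_{S^2}$ equals $C_1(\Xi_+) = -4 \neq 0$. On the other hand, any real line subbundle of $\Xi^\Real|_{S^2}$ is classified by $H^1(S^2;\mathbb{Z}/2) = 0$, hence is trivial and carries a nowhere-vanishing section; that would be a nowhere-vanishing section of the rank-$2$ bundle $\Xi^\Real|_{S^2}$, which is impossible because its Euler number is nonzero \cite{Milnor1974, Tu2017differential}. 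Therefore $\Xi^\Real|_{S^2}$, and hence $\Xi^\Real$ over $\Lightcone$, has no real line subbundle, so $\Xi$ has no linearly polarized subbundle. For the last assertion, a smooth global basis of $\Xi$ made of linearly polarized gravitons and restricting to $(B_{z,+}, B_{z,\times})$ at $\zhat$ would in particular have a first member $C_1$ that is a nowhere-vanishing smooth section, linearly polarized at every $\bfk$; its complex span $\Comp\,C_1$ would then be a smooth line subbundle all of whose elements are scalar multiples of $C_1$, hence linearly polarized, contradicting what was just shown.

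I expect the main obstacle to be spotting and then cleanly justifying the identification $\Xi^\Real \cong \Xi_+$ as real bundles — in particular, verifying that the fiberwise argument based on (\ref{eq:real_decomp}) is sufficiently uniform in $\bfk$ to produce an honest smooth bundle isomorphism rather than merely a pointwise statement — since once that is in hand, the Euler-number obstruction and the triviality of real line bundles over $S^2$ finish the argument routinely.
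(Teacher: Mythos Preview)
Your argument is correct and follows the same architecture as the paper's proof: reduce to showing that $\Xi^\Real|_{S^2}$ has no real line subbundle, observe that any such subbundle would be trivial (since $H^1(S^2;\mathbb{Z}/2)=0$), and then rule this out by showing the Euler class of $\Xi^\Real$ is nonzero. The paper packages the first two steps as Lemma~\ref{lm:linear_lemma}, exactly as you do.

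Where you diverge is in computing the Euler class. The paper performs a direct Chern--Weil calculation: it builds an almost-global orthonormal frame $(B_+,B_\times)$ of $\Xi^\Real$, computes the connection and curvature matrices, and reads off $\mathrm{Pf}(\Omega_\Real)=2\,dS$, hence $e(\Xi^\Real)\neq 0$. You instead produce a real bundle isomorphism $P_+|_{\Xi^\Real}:\Xi^\Real \to \Xi_+$ and invoke the standard identity $e(\text{underlying real bundle of }L)=c_1(L)$ together with Theorem~\ref{thm:graviton_topology} to get Euler number $C_1(\Xi_+)=-4\neq 0$. (The sign discrepancy with the paper's implicit $+4$ is just the orientation choice; using $P_-$ in place of $P_+$ flips it.) Your route is more structural and recycles the Chern-number computation already done, while also exhibiting the pleasant fact that $\Xi^\Real$ and $\Xi_+$ are isomorphic as real plane bundles; the paper's route is self-contained and avoids the extra isomorphism. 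Your injectivity check via (\ref{eq:real_decomp}) is fine, and smoothness of the inverse is automatic for a fiberwise-bijective smooth vector bundle map, so your worry about uniformity in $\bfk$ is not an issue. One cosmetic point: avoid naming the first basis element $C_1$ in your final paragraph, since that symbol already denotes the Chern number.
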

To prove this, we first establish the following lemma. In it, we use the Euler class $e$, which is a characteristic class for real, oriented vector bundles. The Euler class is less commonly used in physics than the Chern classes; expositions can be found in Refs. \cite{Bott2013,HatcherVBKT}.
\begin{lemma}\label{lm:linear_lemma}
    If $\Xi$ has a linearly polarized subbundle then the Euler class of $\Xi^\Real$ vanishes.
\end{lemma}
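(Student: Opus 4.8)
The plan is to reduce the lemma to two classical facts of obstruction theory: real line bundles over $S^2$ are trivial, and a nowhere-vanishing section of an oriented rank-$2$ real bundle kills its Euler class. First I would unpack the hypothesis using the Definition of linearly polarized subbundles: if $\ell$ is a linearly polarized line subbundle of $\Xi$, then $\ell = \ell^\Real \otimes_\Real \Comp$ for some real line subbundle $\ell^\Real \subseteq \Xi^\Real$. As in Section \ref{sec:topology_of_gravitons}, I would restrict everything over the deformation retract $S^2 \seq \Lightcone$, so $\ell^\Real|_{S^2}$ is a real line bundle over $S^2$. Isomorphism classes of real line bundles over a paracompact space $X$ are classified by $w_1 \in H^1(X;\mathbb{Z}/2)$; since $H^1(S^2;\mathbb{Z}/2) = 0$, the bundle $\ell^\Real|_{S^2}$ is trivial and therefore admits a global nowhere-vanishing section $s$.

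Next I would view $s$ as a nowhere-vanishing section of $\Xi^\Real|_{S^2}$ via the inclusion $\ell^\Real \hookrightarrow \Xi^\Real$. Note first that $\Xi^\Real$ is orientable (automatically, since $w_1(\Xi^\Real) \in H^1(S^2;\mathbb{Z}/2) = 0$), so its Euler class $e(\Xi^\Real) \in H^2(S^2;\mathbb{Z})$ is defined. A nowhere-vanishing section of an oriented rank-$2$ bundle trivializes a line subbundle, giving a splitting $\Xi^\Real|_{S^2} \cong \underline{\Real} \oplus E'$, whence $e(\Xi^\Real) = e(\underline{\Real}) \smile e(E') = 0$. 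Equivalently, one can choose a bundle metric on $\Xi^\Real$ and write $\Xi^\Real|_{S^2} \cong \ell^\Real|_{S^2} \oplus (\ell^\Real|_{S^2})^\perp$ as a sum of two real line bundles over $S^2$, both of which are trivial, so $\Xi^\Real|_{S^2}$ is itself trivial and its Euler class vanishes. Either way $e(\Xi^\Real) = e(\Xi^\Real|_{S^2}) = 0$, which is the claim.

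There is no serious obstacle here; the content is packaging, and the only point requiring care is conceptual rather than technical. One cannot deduce $e(\Xi^\Real) = 0$ merely from the triviality of $\Xi$ established in Theorem \ref{thm:graviton_topology}: the complexification map loses information, and indeed $\Xi^\Real$ itself will turn out to be nontrivial with $e(\Xi^\Real) = \pm 4$ (obtained by comparing $\Xi = \Xi^\Real\otimes_\Real\Comp \cong \Xi_+ \oplus \Xi_-$ with the general decomposition $\Xi^\Real \otimes_\Real \Comp \cong L \oplus \bar L$, $c_1(L) = e(\Xi^\Real)$, and reading off $c_1(\Xi_\pm) = \mp 4$). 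The lemma is precisely what exploits the extra \emph{real} structure supplied by a linearly polarized subbundle; combined with $e(\Xi^\Real) = \pm 4 \neq 0$, it yields the contradiction proving Theorem \ref{Thm:no_linear_polarizations}.
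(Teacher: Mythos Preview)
Your argument is correct and follows essentially the same route as the paper: extract a real line subbundle $\ell^\Real\subseteq\Xi^\Real$, use $H^1(S^2;\mathbb{Z}/2)=0$ (the paper phrases this as ``$S^2$ is simply connected'') to trivialize $\ell^\Real|_{S^2}$ and obtain a nonvanishing section, note that $\Xi^\Real$ is orientable for the same reason, and then invoke the vanishing of the Euler class in the presence of a nonvanishing section. Your additional remark that $e(\Xi^\Real)=\pm 4$ via the identification $\Xi^\Real\otimes_\Real\Comp\cong L\oplus\bar L$ with $c_1(L)=e(\Xi^\Real)$ is a nice consistency check with the paper's direct Pfaffian computation $e(\Xi^\Real)=2[dS]$.
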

\begin{proof}[Proof of Lemma \ref{lm:linear_lemma}]
    Assume $\Xi$ has a linearly polarized subbundle, and therefore that $\Xi^\Real$ possesses a real line subbundle. Since $S^2$ is simply connected, there exist no nontrivial real line bundles over $S^2$ (\cite{PalmerducaQin_PT}, Lemma 21). Thus, $\Xi^\Real$ has a trivial line subbundle and must therefore possess a globally nonvanishing section. $\Xi^\Real$ is orientable since every real vector bundle over $S^2$ is orientable (\cite{Bott2013}, Proposition 11.5). If a real oriented vector bundle has a nonvanishing section, then its Euler class is zero (\cite{HatcherVBKT}, Proposition 3.13(e)). Thus, the Euler class of $\Xi^\Real$ vanishes.
\end{proof}
\begin{proof}[Proof of Theorem \ref{Thm:no_linear_polarizations}]
    By the lemma it suffices to show that the Euler class of $\Xi^\Real$ is nonzero. The tensors $B_{z,+}$ and $B_{z_\times}$ form an orthonormal basis of $\Xi^{\Real}(\zhat)$. Then
    \begin{subequations}
    \begin{align}
        \tilde{\mathcal{B}}(\bfk) &= (B_+(\bfk),B_\times(\bfk)), \\
        B_{+}(\bfk) &\doteq \Sigma\Big(R_{\zhat}(\phi)R_{\xhat}(\theta)\Big)B_{z,+} \\
        B_{\times}(\bfk) &\doteq \Sigma\Big(R_{\zhat}(\phi)R_{\xhat}(\theta)\Big)B_{z,\times}
    \end{align}
    \end{subequations}
    forms a smooth orthonormal basis of $\Xi_{\Real}(\theta, \phi)$ at all $\bfk$ except at the north and south poles, where 
    $$\bfk = |\bfk|(\sin \theta \cos \phi, \sin \theta \sin \phi, \cos \theta).$$ We can then calculate the corresponding Berry connection matrix with respect to the spherical coordinate system for $\Lightcone$:
    \begin{align}
        \omega_{\Real} &= \langle \tilde{\mathcal{B}},d \tilde{\mathcal{B}}\rangle \\
        &=\begin{pmatrix}
            0 & -2\cos(\theta)d\phi \\
            2\cos(\theta)d\phi & 0
        \end{pmatrix}.
    \end{align}
    The curvature matrix in these coordinates is
    \begin{align}
        \Omega_\Real &= d\omega_\Real + \omega_\Real \wedge \omega_\Real \\
        &= \begin{pmatrix}
            0 & 2dS \\
            -2dS & 0
        \end{pmatrix}.
    \end{align}
    The Pfaffian of this matrix is \cite{Tu2017differential}
    \begin{equation}
        \mathrm{Pf}(\Omega_\Real) = 2dS.
    \end{equation}
    The Euler class is the de Rham cohomology class of the Pfaffian \cite{Tu2017differential}, so
    \begin{equation}
        e(\Omega_\Real) = 2[dS]
    \end{equation}
    where the brackets denote the cohomology class. $dS$ is not exact (since $\int_{S^2} dS = 4\pi \neq 0)$, so $e(\Omega_\Real)$ is nonvanishing, and therefore $\Xi$ has no linear subbundles.
\end{proof}

\section{Gravitons do not have spin or orbital angular momentum}\label{sec:OAM_SAM}
An important consequence of the nontrivial topology of $R$ and $L$ gravitons is that their angular momentum cannot be split into SAM and OAM parts. We begin by reviewing the more familiar case of photon angular momentum. There has been a long controversy over the legitimacy of splitting photon angular momentum into SAM and OAM parts \cite{Akhiezer1965, VanEnk1994,Bliokh2010, Bialynicki-Birula2011, Leader2013, Bliokh2014, Leader2016, Leader2019, PalmerducaQin_PT}. Although experimentalists have found the concept of the OAM of light very useful (see Ref. \cite{Bliokh2014}, and references therein), it is generally believed that there is no gauge invariant way to split photon angular momentum into SAM and OAM, although even this point has been debated \cite{Chen2008, Leader2014}. Even more disquieting is a conclusion reached by Leader and Lorc\'{e} in their review article \cite{Leader2014} that ``the choice of a particular decomposition is essentially a matter of taste and convenience.'' The issue with this viewpoint is that it fails to account for the fundamental defining feature of angular momentum, namely that angular momentum must derive from an $\SO(3)$ symmetry of the system. That is, any angular momentum operators, such as the SAM and OAM operators, must satisfy $\so(3)$ commutation relations so that they are generators of an $\SO(3)$ symmetry \cite{VanEnk1994,PalmerducaQin_PT}; this was acknowledged by Leader and Lorc\'{e} in a corrigendum \cite{Leader2019} to the aforementioned review article. Furthermore, SAM operators must generate an $\SO(3)$ symmetry of the internal (fiber) DOFs. The issue is that none of the proposed SAM and OAM operators are both well-defined and generators of $\SO(3)$ symmetries. The most obvious attempt to define the SAM operator is to take it to be the conventional spin 1 operator \cite{Akhiezer1965}. However, this operator is ill-defined on photons because the spin $s$ operator must act on an internal $2s+1$-dimensional vector space. Thus, the spin 1 operators must act on a three-dimensional vector space while the internal polarization space of photons is only two-dimensional; the result is that the spin 1 operator is not well-defined on the space of photons. This is concretely illustrated by the fact that spin 1 operators do not preserve the transversality of photon polarizations, and thus map photons to nonphysical modes \cite{VanEnk1994,Bialynicki-Birula2011}. This issue is well-known, and based on a number of different arguments, it has been proposed that the SAM and OAM operators should instead be defined via \cite{VanEnk1994,Bliokh2010,Bialynicki-Birula2011}
\begin{subequations}
\begin{gather}
    \boldsymbol{J}_s = (\hat{\bfk} \cdot \boldsymbol{J})\hat{\bfk} \\
    \boldsymbol{J}_o = \boldsymbol{J} - \boldsymbol{J}_s,
\end{gather}
\end{subequations}
where $\boldsymbol{J}$ is the total angular momentum operator. While these vector operators are well-defined, they do not satisfy $\so(3)$ commutation relations, and therefore do not generate a 3D rotational symmetry, which is the defining feature of angular momentum \cite{VanEnk1994, Leader2019, PalmerducaQin_PT}. Instead, they satisfy the nonstandard commutation relations
\begin{subequations}
\begin{align}
    [J_{s,a},J_{s,b}]&=0 \label{eq:Lie_1},\\ 
    [J_{o,a},J_{s,b}]&= i \epsilon_{abc}J_{s,c} \label{eq:Lie_2},\\
    [J_{o,a},J_{o,b}]&= i \epsilon_{abc}(J_{o,c}-J_{s,c}) \label{eq:Lie_3}
\end{align}
\end{subequations}
where the indices $a,b,c$ run over $\{1,2,3\}$. It can be shown that $\boldsymbol{J}_s$ is associated with an $\Real^3$ Lie algebra symmetry (rather than an $\mathfrak{so}(3)$ symmetry), while $\boldsymbol{J}_o$ does not generate any symmetry at all \cite{PalmerducaQin_PT}. It is then clear that these are not truly angular momentum operators.

Despite these foundational issues, the so-called OAM of light has been a popular concept in research \cite{Shen2019,Franke2022}. Given this popularity as well as the breakthroughs in the detection of gravitational waves of the last decade \cite{GravWave2016Blackhole,GravWave2017NeutronStar}, it is unsurprising that researchers have begun looking into the ``OAM'' of gravitational waves \cite{Bialynicki-Birula_2016, Baral2020, Wu2023}. However, we show now that, as in the case of the angular momentum of light, gravitational angular momentum cannot be split into SAM and OAM. 

The (total) angular momentum operator $\boldsymbol{J}$ of gravitons is the generator of the Poincar\'{e} action $\Sigma|_{\SO(3)}$ restricted to $\SO(3) \seq \poincare$. If the angular momentum splits into SAM and OAM, then
\begin{equation}\label{eq:LS_decomp}
    \boldsymbol{J} = \boldsymbol{L} + \boldsymbol{S}
\end{equation}
where $\boldsymbol{L}$ and $\boldsymbol{S}$ satisfy $\so(3)$ commutation relations, and thus generate $\SO(3)$ actions $\Sigma_L$ and $\Sigma_S$ on $\Xi$. Furthermore, because spin symmetries are internal symmetries, $\boldsymbol{S}$ must generate an internal symmetry, that is, the action of $\Sigma_S$ on $(k,A)$ does not change $k$. More formally, we say that $\Sigma_S$ is stabilizing.
\begin{theorem}
    There exist no nontrivial SAM-OAM splittings of the angular momentum of gravitons as in equation (\ref{eq:LS_decomp}).
\end{theorem}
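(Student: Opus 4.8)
The plan is to argue by contradiction. Suppose a nontrivial SAM--OAM splitting $\boldsymbol{J} = \boldsymbol{L} + \boldsymbol{S}$ exists: the operators $\boldsymbol{L},\boldsymbol{S}$ obey $\so(3)$ commutation relations and generate $\SO(3)$ actions $\Sigma_L,\Sigma_S$ on $\Xi$, the action $\Sigma_S$ is stabilizing, and $\boldsymbol{S} \neq 0$. I will show that $\boldsymbol{S}$ is in fact forced to vanish, so the only splitting is the trivial one $\boldsymbol{L} = \boldsymbol{J}$, $\boldsymbol{S} = 0$.

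The key observation is that a stabilizing $\SO(3)$ action restricts, fiber by fiber, to an ordinary representation of $\SO(3)$. Indeed, since $\Sigma_S(R)$ fixes the base point, for each $k$ the maps $\Sigma_S(R)|_{\Xi(k)}$ with $R \in \SO(3)$ are linear automorphisms of the two-complex-dimensional space $\Xi(k)$ satisfying $\Sigma_S(R_1 R_2)|_{\Xi(k)} = \Sigma_S(R_1)|_{\Xi(k)}\,\Sigma_S(R_2)|_{\Xi(k)}$. But every complex irreducible representation of $\SO(3)$ has odd dimension $2\ell+1$, so any two-dimensional complex representation of $\SO(3)$ must be $\mathds{1} \oplus \mathds{1}$, the trivial representation, in which every group element acts as the identity. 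Hence $\Sigma_S(R)$ is the identity on every fiber, $\Sigma_S$ is the trivial action, and its generator $\boldsymbol{S}$ vanishes, contradicting nontriviality. Put differently, the split would force $\Sigma_L = \Sigma|_{\SO(3)}$, whose restriction to rotations about $\bfk$ acts with helicity $\pm 2$ on $\Xi_\pm$ by Theorem~\ref{thm:RL_subbundles}; that action genuinely rotates the internal polarization DOFs rather than merely transporting fibers, so it cannot be regarded as orbital.

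The step that needs the most care is the passage from $\so(3) \cong \su(2)$ commutation relations to a genuine $\SO(3)$ action rather than merely an $\SU(2)$ one, since the spin-$\tfrac12$ representation of $\SU(2)$ on $\Comp^2$ is nontrivial and would defeat the argument above. This is precisely the content of the requirement, built into the definition of a SAM--OAM splitting, that $\boldsymbol{S}$ generate an $\SO(3)$ (not $\SU(2)$) symmetry; it can be further justified by noting that if $\boldsymbol{L}$ and $\boldsymbol{S}$ act on complementary degrees of freedom, hence commute, then $\Sigma(R_{\bfk}(2\pi)) = \mathrm{id}$ together with $\Sigma_L(R_{\bfk}(2\pi)) = \mathrm{id}$ (as $\boldsymbol{L}$ generates an $\SO(3)$ action) forces $\Sigma_S(R_{\bfk}(2\pi)) = \mathrm{id}$, so $\Sigma_S$ descends to $\SO(3)$. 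Alternatively, the spin-$\tfrac12$ case can be excluded topologically: a fiberwise spin-$\tfrac12$ $\Sigma_S$ would split $\Xi$ into the $\pm\tfrac12$ eigenbundles of $\hat{\bfk}\cdot\boldsymbol{S}$, and matching this against the unique helicity decomposition $\Xi = \Xi_+ \oplus \Xi_-$ is incompatible with the Chern numbers $C_1(\Xi_\pm) = \mp 4$.

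Finally, I would stress the conceptual reason behind the obstruction: it is topological. A genuine SAM--OAM splitting amounts to a global polarization frame of $\Xi$ in which $\Sigma$ acts as a rotation of the momentum-space amplitude tensored with a fixed $\SO(3)$ representation on the two-dimensional polarization index; since that representation must be trivial, $\Sigma$ would then carry helicity $0$. The transversality condition (\ref{eq:transverse_gauge}) instead ties the polarization fibers to the momentum direction, and this twisting is measured by the nonvanishing Chern numbers $C_1(\Xi_\pm) = \mp 4$. It is therefore the nontrivial topology of the $R$ and $L$ graviton bundles --- the same feature they share with the (equally unsplittable) photon, whose circularly polarized bundles have $C_1 = \mp 2$ --- that forbids disentangling graviton angular momentum into spin and orbital parts.
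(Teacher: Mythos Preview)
Your core argument is correct and essentially identical to the paper's: the stabilizing action $\Sigma_S$ makes each two-dimensional fiber $\Xi(k)$ into an $\SO(3)$ representation, and since the irreducibles of $\SO(3)$ have odd dimension $2s+1$, the only option is two copies of the trivial representation, forcing $\boldsymbol{S}=0$. Your additional paragraph on the $\SO(3)$ versus $\SU(2)$ distinction is extra care the paper does not spell out (it simply builds the $\SO(3)$ requirement into the definition of the splitting), and your concluding topological commentary parallels the paper's own post-proof discussion.
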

\begin{proof}
    Suppose there were such a splitting. Since $\Sigma_S$ is stabilizing, each fiber $\Xi(\bfk)$ is a two-dimensional vector space representation of $\SO(3)$ with the action $\Sigma_S$. However, the (nonprojective) irreducible representations of $\SO(3)$ are labeled by their integer spin $s$ and have dimensions $2s+1$ \cite{Hall2013}. Thus, the representation $\Sigma_S$ on $\Xi(\bfk)$ must be the direct sum of two spin $0$ representations. Since the spin $0$ representation is simply the trivial representation which leaves the vectors unchanged, $\Sigma_S$ must be the identity operator. Thus, $\boldsymbol{S} = 0$, and therefore $\boldsymbol{J} = \boldsymbol{L}$ and $\Sigma|_{\SO(3)} = \Sigma_L$. Thus, there is no nontrivial way to split gravitational angular momentum into SAM and OAM parts.
\end{proof}

This argument against the splitting of gravitational OAM and SAM appears geometric, relying on the rotational symmetry of the gravitons. However, as in the case of photon angular momentum, this obstruction is, at its root, topological. As the discussion is essentially the same as for the photon case and the technical apparatus somewhat large, we summarize the discussion here and refer the reader to Ref. \cite{PalmerducaQin_PT} for the detailed mathematics. Elementary particles can be described as vector bundle representations of the Poincar\'{e} group $\ISO^+(3,1)$ \cite{Simms1968,Asorey1985,PalmerducaQin_PT}.  The little group of particles with four-momentum $k$ is defined as the subset of the Lorentz group $\SO^+(3,1)$ which do not change $k$. Up to group isomorphism, the little group $H$ is independent of $k$. By its definition, the little group depends only the momentum space $M$ and not on the fibers (the polarizations). Furthermore, each particle has a canonical little group action which is stabilizing in the sense that the action does not change the fiber. Thus, the little group describes an internal symmetry of the particles. For massive particles, the momentum space is a mass hyperboloid and the little group is $\SO(3)$. The canonical little group action of massive particles thus describes an internal $\SO(3)$ symmetry; its generator is the spin angular momentum. However, a topological singularity occurs as the $m\rightarrow 0$ limit is realized. The momentum space jumps from the topologically trivial mass hyperboloid to the nontrivial forward lightcone. This topological singularity allows the little group to jump as well from $\SO(3)$ to $\ISO(2)$, where the latter is the 2D Euclidean group consisting of translations and rotations of $\Real^2$. However, the translations always act trivially, so the little group is effectively just $\SO(2)$. The generator of this two-dimensional rotational symmetry is the helicity operator $\hat{\bfk} \cdot \boldsymbol{J}$. Since this is not an $\SO(3)$ generator, helicity is not an angular momentum operator. Indeed, helicity is a scalar operator, and so it is not even a vector operator. The commonly proposed SAM operator $\boldsymbol{J}_s = (\hat{\bfk} \cdot \boldsymbol{J}) \hat{\bfk}$ can be seen as an ad hoc attempt to ``vectorize'' the helicity operator. However, this attempt fails because the components of this new operator commute, and thus describe an $\Real^3$ symmetry rather than an $\SO(3)$ symmetry. We can summarize the situation by saying that massive particles have spin while massless particles have helicity. Something special happens for massive particles which does not happen for massless particles, namely, the little group happens to coincide with $\SO(3)$, so spin is actually an angular momentum. The same is not true of helicity.

Heuristically, it is not surprising that the topological nontriviality of R and L photons and gravitons obstructs splitting the angular momentum into SAM and OAM. An SAM-OAM decomposition relies on nicely splitting the internal (fiber) and external (base manifold) DOFs, and likewise splitting the rotational symmetry into parts acting internally and externally. Furthermore, the action $\Sigma$ acts independently on $\Xi_R$ and $\Xi_L$, so the same would be true of the spin and orbital symmetries $\Sigma_S$ and $\Sigma_L$. However, the topological nontriviality of the R and L gravitons means that their internal and external DOFs are twisted together, and this precludes splitting the rotational action into internal and external parts.

We finish this section by remarking on a potential point of confusion. There are known methods for starting with a spin $s \geq 1$ representation and subsequently applying gauge constraints to eliminate DOFs until one ends up with only two DOFs which correspond to massless helicity $\pm s$ particles \cite{Maggiore2005,Tong2006}. This might appear to suggest that these massless particles have spin $s$. However, while these techniques highlight an interesting relationship between massive spin $s$ particles and massless helicity $\pm s$ particles, in eliminating DOFs one also destroys the internal $\SO(3)$ symmetry of the system, and thus it is incorrect to say particles with helicity $\pm s$ have spin $s$. Indeed, the representations of $\SO(3)$ are completely classified \cite{Hall2013}, and there is a unique (nonprojective) vector space representation of $\SO(3)$ on the two-dimensional fibers of a massless particle, namely, the trivial representation which is the direct sum of two spin $0$ representations. Thus, the fibers of the massless particle do not form spin $s$ representations.

\section{Topology of gravitational waves in curved spacetimes} \label{sec:curved_spacetime}
The above results were derived assuming a flat spacetime background as in Eq. (\ref{eq:flat_assumption}). In this section we discuss the situations in which this topological analysis extends to gravitational waves in curved spacetimes.

It is not generally possible to unambiguously identify a gravitational wave component of a solution of the Einstein field equations, even when assuming the gravitational waves are small amplitude \cite{Maggiore_grav_2007}. One fairly general situation in which gravitational waves can be isolated in the geometric optics limit, also known as the short-wave approximation, in which there is separation of spatial and/or temporal scales. We will follow the treatments of Misnor, Thorne, and Wheeler \cite{Misnor1973} and Maggiore \cite{Maggiore_grav_2007}, expressing the metric as
\begin{equation}\label{eq:WKB_metric}
    g_{\mu \nu}(x) = \bar{g}_{\mu \nu}(x) + h_{\mu \nu}(x)
\end{equation}
where $\bar{g}_{\mu \nu}$ is the background metric and $h_{\mu \nu}$ describes small amplitude gravitational waves with $g_{\mu \nu} \sim 1$ and $|h_{\mu \nu }| \sim \delta \ll 1$. The short-wave approximation is applicable when $h_{\mu \nu}$ varies on length or time scales $\lambda$ much shorter than the characteristic length or time scales $\bar{L}$ of $\bar{g}_{\mu \nu}$: $|\lambda / \bar{L}| \sim \epsilon \ll 1$. If the stress energy tensor $T_{\mu \nu}$ vanishes to leading order, then $\epsilon \sim \delta$, but in general the relative scaling of $\epsilon$ and $\delta$ may be different \cite{Maggiore_grav_2007}. Defining $h = \tensor{h}{^{\alpha}_{\alpha}}$ (where indices are raised and lowered with $\bar{g}_{\mu \nu}$) and the traceless perturbation
\begin{equation}
    \bar{h}_{\mu\nu}(x) = h_{\mu \nu} - \frac{1}{2}\bar{g}_{\mu \nu}h,
\end{equation}
we make the eikonal ansatz
\begin{equation}
    \bar{h}^{\mu \nu} (x) = [A(x) e^{\mu \nu} (x) + \epsilon B^{\mu \nu}(x) + \ldots ] e^{i\theta(x)/\epsilon}
\end{equation}
where the terms inside the bracket are slowly varying and $e^{\mu \nu}$ is traceless and symmetric. Applying the standard geometric optics formalism with $k^\mu = \partial_\mu \theta$, the fast dynamics are, to leading order, decoupled from the slow dynamics and not explicitly dependent on $T^{\mu \nu}$. One finds the rays are null geodesics,
\begin{gather}
    k_\alpha k^\alpha = 0 \\
    k_{\alpha| \beta}k^\beta = 0,
\end{gather}
(where the vertical bar denotes the covariant derivative corresponding to $\bar{g}_{\mu \nu}$) and the polarization vectors are transverse and parallel transported along the geodesics
\begin{align}
    e^{\mu \nu}k_\nu = 0 \\
    \tensor{e}{^{\mu \nu}_{| \alpha}} k^\alpha = 0
\end{align}
We focus locally on  waves propagating from a fixed position which we choose to be $x=0$. In analogy with the temporal gauge condition (\ref{eq:complete_gauge}), it is always possible to make a gauge transformation such that $e^{\mu 0} = 0$ over scales of order $\lambda$ (\cite{Misnor1973}, Eq. (35.69)). Thus, the gauge can be fixed such that the polarization vectors of locally propagating gravitational waves in a curved spacetime satisfy nearly the same transverse-traceless gauge conditions as those of a flat spacetime (see Eq. (\ref{eq:gauge_constraints})). The one difference is that $e^{ij}$ is traceless with respect to $\bar{g}_{\mu \nu}$ rather than $\eta_{\mu \nu}$ as in the vacuum case:
\begin{equation}\label{eq:curved_trace}
    e^{ij}\bar{g}_{ij} = 0.
\end{equation}
This can be remedied if from the outset we switch to locally inertial coordinates with respect to $\bar{g}_{\mu \nu}$ at $x=0$. That is, let $\tilde{x}^\mu$ be such that $\bar{g}_{\mu \nu}dx^\mu dx^\nu = [\eta_{\mu \nu} + O(x^2) ]d\tilde{x}^\mu d\tilde{x}^\nu$. This can be accomplished with a transformation of the form
\begin{equation}
    \tilde{x}^\mu = \tensor{K}{^{\mu}_\nu}x^\nu + \frac{1}{2}\tensor{L}{^{\mu}_{\nu \sigma}}x^\nu x^\sigma + O(x^3)
\end{equation}
where $\tensor{K}{_{\mu\nu}} \sim |\tensor{\bar{g}}{_{\mu\nu}}(0)|^{1/2} \sim 1$ and $|L_{\mu \nu \sigma}| \sim |\bar{g}_{\mu \nu, \sigma}(0)| \sim \bar{L}^{-1}$. Writing Eq. (\ref{eq:WKB_metric}) in the $\tilde{x}$ coordinate system, for $\frac{\tilde{x}}{L} \sim \epsilon$ we have, 
\begin{equation}
    \tilde{g}_{\mu \nu} = [\eta_{\mu \nu} + O(\epsilon^2)]_\text{slow} + [h_{\alpha \beta}\tensor{K}{^\alpha _\mu}\tensor{K}{^\beta _\nu} + O(\epsilon \delta)]_\text{fast}.
\end{equation}
The brackets denote the slow and fast terms which have characteristic lengths $\bar{L}$ and $\lambda$, respectively. Thus, it is possible to choose coordinates such that the background metric is $\eta_{\mu \nu}$ to linear order while preserving the scale separation required by the short-wave approximation. In such coordinates Eq. (\ref{eq:curved_trace}) requires that $e^{ij}$ is a traceless matrix, and we see that the local polarization vectors $e^{ij}$ now satisfies precisely the same conditions as the vacuum transverse traceless gauge. Thus, within the geometric optics limit, the topology of gravitational waves in a curved background spacetime is the same as the topology of the vacuum case.

In the particularly simple case of a Friedmann–Lemaître–Robertson–Walker (FLRW) cosmology, we can go beyond the local analysis of geometric optics, and describe the global topology of small amplitude gravitational waves, which ends up being the same as the vacuum case. Here, gravitational waves are described by transverse-traceless perturbations $h_{ij}$ to the Robertson-Walker metric,
\begin{equation}
    ds^2 = a(\eta) \big[ -d\eta^2 + dx^2 + dy^2 + dz^2 + h_{ij} dx^i dx^j \big],
\end{equation}
where the scale factor $a(t)$ satisfies the Friedmann equations and $dt = a(t)d\eta$ \cite{Creighton2011}. Assuming no perturbations to the stress-energy tensor, the gravitational waves satisfy the damped wave equation
\begin{equation}
    \nabla^2 h_{ij} - \frac{\partial^2}{\partial \eta^2}h_{ij} - 2\frac{1}{a}\frac{da}{d\eta}\frac{\partial h_{ij}}{\partial \eta} = 0.
\end{equation}
The scale factor $a(t)$ can thus cause damping or amplification of gravitational waves as well as redshift, but it does not effect the plane wave mode structure: the topology of gravitational waves in an FLWR cosmology are precisely the same as those in vacuum.

Our formalism, which characterizes the topology of plane waves, does not immediately extend to analysis of the global quasinormal modes in simple black hole geometries, which are radially propagating \cite{Creighton2011}. Locally, high frequency gravitational waves in such spacetimes are described by geometric optics and thus have the topology of vacuum gravitons. However, additional work and a modified framework is needed to determine if such topology also describes the global quasinormal modes.

\section{Conclusion}

This study of gravitons gives a further illustration of the fact that there is a fundamental connection between massless particles and nontrivial topology. An immediate implication of the nontriviality of $\Xi_+$ and $\Xi_-$ is that it is not possible to construct global bases of R and L gravitons. It is also not possible to construct a global basis of linearly polarized gravitons, but this is due to the more fundamental issue that such polarizations are not even globally well-defined. In light of these facts, it is rather surprising that the full set of gravitons $\Xi$ is topologically trivial, allowing the construction of a smooth global basis. Such global bases have been previously invoked in the literature, although the explicit expressions given for them were not smooth \cite{Yu1999, Maggiore2005, Hu2021, Carney2024}.

The Poincar\'{e} symmetry of the graviton bundle induces a geometric decomposition of $\Xi$ into the topologically nontrivial R and L graviton bundles. These nontrivial bundles have geometry characterized by the helicities $\pm 2$ and topology characterized by the Chern numbers $\mp 4$. It is interesting to note that the $-2$ factor relating the helicities and Chern numbers also appears for the $R$ and $L$ photons which have helicities $\pm 1$ and Chern numbers $\mp 2$ \cite{Bliokh2015, PalmerducaQin_PT}. We conjecture that this relationship $C = -2h$ between topology and geometry generally holds for massless particles. Future work will examine the validity of this conjecture.

An important finding is that the topological nontriviality of $\Xi_+$ and $\Xi_-$ obstructs the splitting of graviton angular momentum into SAM and OAM. That is, gravitons, like photons, have well-defined angular momentum but it is not possible to decompose it into two parts depending only on the internal and external DOFs, respectively. Indeed, that the $R$ and $L$ gravitons have nontrivial topology means that they cannot simply be written as $\Lightcone \times \Comp$, showing that the internal and external DOFs do not split nicely. They are rather twisted in a nontrivial way, ultimately preventing the splitting of angular momentum into OAM and SAM. That massless particles can be topologically nontrivial is a recent finding, and we anticipate that future research will uncover more important physical consequences of such nontriviality.

\begin{acknowledgments}
This work is supported by U.S. Department of Energy (DE-AC02-09CH11466).
\end{acknowledgments}

\bibliography{gt}
\end{document}